\newtheorem{thm}{Theorem}
\newtheorem{lem}{Lemma}
\newtheorem{cor}{Corollary}
\newtheorem{Def}{Definition}
\begin{document}

\begin{center}
\Large\bf The Component Diagnosability of General Networks\\
\vspace{0.5cm}
  \large
    Hongbin Zhuang$^{1,2}$, Wenzhong Guo$^{1}$, Xiaoyan Li$^{1,2*}$, Ximeng Liu$^{1,2}$, \\ Cheng-Kuan Lin$^{1}$ \\
    \small
    $^1$College of Computer and Data Science, Fuzhou University,\\ Fuzhou 350108, China\\
    $^2$Fujian Provincial Key Laboratory of Information Security of Network \\Systems, Fuzhou University, Fuzhou 350108, China\\
    $^*$Corresponding author:  xyli@fzu.edu.cn\\
\end{center}

\noindent\bf Abstract: \rm  The processor failures in a multiprocessor system have a negative impact on its distributed computing efficiency. Because of the rapid expansion of multiprocessor systems, the importance of fault diagnosis is becoming increasingly prominent. The $h$-component diagnosability of $G$, denoted by $c t_{h}(G)$, is the maximum number of nodes of the faulty set $F$ that is correctly identified in a system, and the number of components in $G-F$ is at least $h$. In this paper, we determine the $(h+1)$-component diagnosability of general networks under the PMC model and MM$^{*}$ model. As applications, the component diagnosability is explored for some well-known networks, including complete cubic networks, hierarchical cubic networks, generalized exchanged hypercubes, dual-cube-like networks, hierarchical hypercubes, Cayley graphs generated by transposition trees \textnormal{(}except star graphs\textnormal{)}, and DQcube as well. Furthermore, we provide some comparison results between the component diagnosability and other fault diagnosabilities.

\noindent\bf Keywords: \rm Fault diagnosis, Component diagnosability, PMC model, MM$^{*}$ model, General networks.


\section{Introduction}
Multiprocessor systems have high-performance distributed computing capability, which makes them be able to carry out some coherent tasks. With the scale of the system continuously extends, the probability of processor failures increases dramatically so that we can't ignore such a phenomenon. How to sustain the performance and dependability of large-scale multiprocessor systems has been brought to the fore in the existence of processor failures~\cite{new2}. Thus, fault diagnosis is the most essential step in constructing and maintaining multiprocessor systems \cite{{guo,hong,liang17}}. Using a topological structure of the interconnection network, a multiprocessor system can be taken as an undirected graph which consists of many nodes acting as processors and many edges acting as communication links. In the following, we do not distinguish among multiprocessor systems, interconnection networks, and graphs.

The process of  identifying all faulty processors was referred to as \emph{system-level diagnosis}. There are several diagnosis models for system-level diagnosis. The \emph{PMC model}, presented by  Preparate, Metze, and Chien~\cite{pmc67}, is a test-based model. Under the PMC model, the processor sends test messages to adjacent processors for performing fault diagnosis. As a tester, a processor $x$ can diagnose its neighbor $y$ which is a testee. We call such a test as the ordered pair $\langle x, y\rangle$. Under the condition that the tester $x$ is fault-free, if the result of $\langle x, y\rangle$ is 0 (resp. 1), then the testee $y$ is fault-free (resp. faulty). When the tester $x$ is faulty, the result of $\langle x, y\rangle$ is unreliable. In \cite{mm81}, Maeng and Malek suggested another diagnosis model, the \emph{MM model}, which is a comparison-based model. In this model, a processor $z$ which we called as a comparator sends the same test to its two neighbours $x$ and $ y$, then compares their responses. We denote such a comparison as $(x,y)_{z}$.  $(x,y)_{z}=0$ (resp. 1) indicates that the test results for $x$ and $y$ are identical (resp. distinct). Assume that the comparator $z$ is fault-free. Then $(x,y)_{z}=0$ implies that both $x$ and $y$ are fault-free, while $(x,y)_{z}=1$ implies that there is at least one faulty processor between $x$ and $y$. When the comparator $z$ is faulty, the result of $(x,y)_{z}$ is unreliable. Based on the MM model, Sengupta and Dahbura \cite{sd92} proposed a special model, the \emph{MM$^{*}$ model}. This model supposes that every processor must test each pair of its adjacent processors.

A system $G$ is defined as \emph{$t$-diagnosable} if $|F| \leq t$ and all nodes in $F$ can be detected, where $F$ is the faulty node set of $G$. On the premise that all nodes of $F$ can be identified, the maximum number of nodes in $F$ is defined as the \emph{diagnosability} of the system. The diagnosability of many interconnection networks under the PMC model and MM$^{*}$ model have been studied \cite{{chiang12,cheng13,1lxy18}}. Moreover, many innovative concepts of fault diagnosability have emerged recently. Considering the actual situation, Lai et al.~\cite{ltch05} proposed a more realistic measure of diagnosability, \emph{conditional diagnosability}, which restricts that all adjacent nodes of every node can't fail simultaneously. Lin et al.~\cite{lin15} evaluated the conditional diagnosability of alternating group networks under the PMC model. As a generalization of conditional diagnosability, \emph{$h$-extra conditional diagnosability} was defined by Zhang et al. \cite{zhang16} which is the diagnosability under the condition that every component of $G-F$ has at least $h+1$ nodes where $F$ is the faulty node set of $G$. Li et al.~\cite{l19}~\cite{l20} studied the $h$-extra conditional diagnosability of the data center network DCell under the PMC model and MM$^{*}$ model. Zhang et al.~\cite{zhangs} determined the $h$-extra conditional diagnosability of twisted hypercubes under the MM$^{*}$ model. Peng et al.~\cite{plth12} proposed the \emph{$h$-good-neighbor conditional diagnosability}, which is the diagnosability under the assumption that every fault-free processor has at least $h$ fault-free neighbors. Zhao et al.~\cite{zhao19} established the $h$-good-neighbor conditional diagnosability of the hierarchical hypercube network under the PMC model and MM$^{*}$ model. Hu et al.~\cite{new5} investigated the equal relation between $h$-good-neighbor diagnosability under the PMC model and $h$-good-neighbor diagnosability under the MM$^{*}$ model of a graph.

If the faulty node set is large-scale, then there will be many components after deleting it from the multiprocessor system~\cite{{new1,new3,new4,new6}}. In this case, the diagnosability is closely related to the number of components. Given this, Zhang et al. \cite{zhang200} proposed a new fault diagnosability, called \emph{$h$-component diagnosability}. The $h$-component diagnosability $c t_{h}(G)$  of $G$ is the maximum number of nodes of the faulty set $F$ that are correctly identified in a system, and the number of components in $G-F$ is at least $h$. Our contributions are mainly as follows:

\begin{itemize}

\item  We determine the $(h+1)$-component diagnosability of the general network $G$ is $(h+1)(r-1)-\frac{h(h+1)}{2}+1$ with $r\geq 4$ and $1\leq h\leq r-3$ under the PMC model and MM$^{*}$ model, where the definition of $r$ is given in Theorem \ref{the1}.

\item  Using the results obtained, we establish the $(h+1)$-component diagnosability of some famous networks, including complete cubic networks, hierarchical cubic networks, generalized exchanged hypercubes, dual-cube-like networks, hierarchical hypercubes, Cayley graphs generated by transposition trees \textnormal{(}except star graphs\textnormal{)}, and DQcube. Moreover, we make some comparisons to show the advantages of component diagnosability.

\end{itemize}

The remaining sections of this paper are organized as follows. Section 2 lists the terms and notations used throughout the paper. Section 3 aims to determine the $(h+1)$-component diagnosability $c t_{h+1}(G)$ of general networks under the PMC model and MM$^{*}$ model. Section 4 applies the results in Section 3 to some famous networks. Section 5 provides some comparison results between the component diagnosability and other fault diagnosabilities. Finally, we conclude this paper in Section 6.

\section{Preliminaries}

\label{sec:Preliminaries}

In Section 2.1, we will introduce some terminologies and notations used in this paper. And then, in Section 2.2, we will present the concept of $h$-component diagnosability of a system $G$ under the PMC model and MM$^{*}$ model.

\subsection{Terminologies and notations}

For terminologies and notations not defined in this paper, we follow the reference~\cite{HsuBook}.  We represent a multiprocessor system by a simple undirected graph $G=(V(G),E(G))$, where $V(G)$ is a \emph{node set} consisting of all processors in $G$ and $E(G)=\{(u,v)|(u,v)$ is an unordered pair of $V(G)\}$ is an \emph{edge set}. If two nodes $u$ and $v$ are \emph{adjacent}, then $(u, v)\in E(G)$. The set $N_{G}(u)=\{v\in V(G)| (u,v)\in E(G)\}$ contains all the \emph{neighbors} of node $u$. Let $X\subseteq V(G)$, we use $G[X]$ to represent the subgraph of $G$ induced by the node subset $X$. We denote $G-X$ as $G[V(G)\backslash X]$. And we set $N_{G}(X)=\{v\in V(G)\setminus X|(u,v)\in E(G)$ and $u\in X\}=\bigcup\limits_{u\in X} N_{G}(u)\setminus X$ and $N_{G}[X]=N_{G}(X)\cup X$. Given two distinct node subsets $M, N\subseteq V(G)$, $E[M,N]$ is the set of all edges between $M$ and $N$ in $G$. A maximally connected subgraph of a graph is called \emph{component}. A component is odd if and only if the number of its nodes is odd, and is even otherwise. The number of odd components of $G$ is denoted by $o(G).$ The \emph{degree} of $u$ in $G$ is denoted by $deg_{G}(u)=|N_{G}(u)|$. Let $\delta(G)=\min\{deg_{G}(u)|u\in V(G)\}$, $\Delta(G)=\max\{deg_{G}(u)|u\in V(G)\}$. We use $G_{i}\cong G_{j}$ with $i\neq j$ to represent the graph $G_{i}$ is isomorphic to the graph $G_{j}$.

The $n$-dimensional hypercube $Q_{n}(n \geq 2)$ is a graph consisting of $2^{n}$ nodes, each of which has the form $u=$ $u_{n-1} u_{n-2} \cdots u_{0},$ where $u_{i} \in\{1,0\}$ for $0 \leq i \leq n-1.$
Two nodes $u=u_{n-1} u_{n-2} \cdots u_{0}$ and $v=v_{n-1} v_{n-2} \cdots v_{0}$ are adjacent if and only if there exists an integer $j \in$ $\{0,1, \cdots, n-1\}$ such that $u_{j} \neq v_{j}$ and $u_{i}=v_{i},$ for each $i \in\{0,1, \cdots, n-1\} \backslash\{j\} .$ Such an edge $(u, v)$ is called a $j$-dimensional edge.
There exist $V_{0}, V_{1}\subseteq V(Q_{n})$ such that the following two conditions hold:
\begin{itemize}
\item $V(Q_{n})=V_{0}\cup V_{1}$, $V_{0}\neq \emptyset$, $V_{1}\neq \emptyset$, $V_{0}\cap V_{1}=\emptyset$, and $Q_{n}[V_{0}], Q_{n}[V_{1}] \subseteq Q_{n}$;
\item $E(V_{0}, V_{1})$ is a perfect matching $M$ between $V_{0}$ and $V_{1}$ in $Q_{n}$.
\end{itemize}

We use $Q_{n-1}^{0}$, $Q_{n-1}^{1}$ to denote the induced subgraph $Q_{n}[V_{0}]$, $Q_{n}[V_{1}]$, respectively. Clearly, they are both $(n-1)$-dimensional hypercubes, and $E(Q_{n-1}^{0})$, $E(Q_{n-1}^{1})$, $M$ is a decomposition of $E(Q_{n})$. We define the decomposition as: $Q_{n}=G(Q_{n-1}^{0},Q_{n-1}^{1};M)$.

\begin{Def}\label{def1}\textnormal{~\cite{lhec12}}~ A set of nodes $S\subseteq V(G)$ is an $h$-component node cut if $G-S$ is disconnected and it has at least $h$ components.
\end{Def}

\begin{lem}\label{lem1}\textnormal{\cite{ja07}}
A graph $G$ has a perfect matching if and only if $o(G-A) \leq|A|$ for all $A \subseteq V(G)$.
\end{lem}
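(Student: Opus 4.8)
This is Tutte's 1-factor theorem, and the plan is to prove the two implications separately: necessity by a direct counting argument, and sufficiency by an extremal argument on an edge-maximal counterexample.

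For necessity, suppose $G$ has a perfect matching $M$ and fix $A \subseteq V(G)$. Each odd component $C$ of $G-A$ has an odd number of vertices, so $M$ cannot match all of $V(C)$ inside $C$; hence at least one vertex of $C$ is matched by $M$ to a vertex of $A$. Distinct components are vertex-disjoint, so the matching edges leaving distinct odd components land on distinct vertices of $A$, giving $o(G-A) \le |A|$.

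For sufficiency, assume toward a contradiction that $G$ satisfies $o(G-A) \le |A|$ for every $A$ but has no perfect matching. Taking $A = \emptyset$ yields $o(G) = 0$, so $|V(G)|$ is even. Next I would observe that adding an edge never increases $o(G-A)$ for any $A$: if the new edge meets $A$ nothing changes, and otherwise it either stays within one component of $G-A$ or merges two of them, and a parity check on the merge (odd$+$odd, odd$+$even, even$+$even) shows the number of odd components cannot go up. Hence Tutte's condition is preserved under edge additions, so I may pass to an edge-maximal graph $G^{*}$ on the same vertex set still satisfying the condition and still having no perfect matching, and it suffices to reach a contradiction for $G^{*}$. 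Let $A^{*} = \{\, v \in V(G^{*}) : \deg_{G^{*}}(v) = |V(G^{*})| - 1 \,\}$. The core structural claim is that every component of $G^{*} - A^{*}$ is a complete graph. Granting this, Tutte's condition gives $o(G^{*}-A^{*}) \le |A^{*}|$; since $|V(G^{*})|$ is even, a parity count forces $|A^{*}| - o(G^{*}-A^{*})$ to be even; so one can match a distinct vertex of $A^{*}$ into each odd component and complete a perfect matching inside that clique (even remainder), perfectly match each even component inside itself, and perfectly match the remaining even number of vertices of $A^{*}$ among themselves since $A^{*}$ is a clique. This produces a perfect matching of $G^{*}$, a contradiction.

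The remaining task, and the main obstacle, is the structural claim. If some component of $G^{*}-A^{*}$ is not complete, it contains vertices $x,y,z$ with $xy, yz \in E(G^{*})$ but $xz \notin E(G^{*})$, and since $y \notin A^{*}$ there is a vertex $w$ with $yw \notin E(G^{*})$. By edge-maximality, $G^{*}+xz$ has a perfect matching $M_1$ containing $xz$, and $G^{*}+yw$ has a perfect matching $M_2$ containing $yw$. I would then analyze the symmetric difference $M_1 \triangle M_2$, a disjoint union of paths and even cycles containing the edges $xz$ and $yw$, and according to whether $xz$ and $yw$ lie in the same component of $M_1 \triangle M_2$ or in different ones, recombine alternating segments using only edges of $G^{*}$ (avoiding both $xz$ and $yw$) to extract a perfect matching of $G^{*}$ itself, contradicting the choice of $G^{*}$. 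Carrying out this case analysis cleanly is the delicate point of the proof.
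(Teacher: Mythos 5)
The paper does not prove this lemma at all: it is Tutte's 1-factor theorem, quoted from the textbook \cite{ja07}, so there is no in-paper argument to compare against; your outline is the standard extremal (edge-maximal counterexample) proof found there. Your necessity direction is complete and correct, the observation that adding edges preserves Tutte's condition is correct, and the derivation of a perfect matching of $G^{*}$ from the structural claim (including the parity count $|A^{*}|\equiv o(G^{*}-A^{*}) \pmod 2$) is sound.

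The genuine gap is the structural claim itself, i.e.\ that every component of $G^{*}-A^{*}$ is complete: this is where all the difficulty of Tutte's theorem lives, and you only state a plan for it. Two points are missing. First, a small one: since $M_{1}$ and $M_{2}$ are perfect matchings, $M_{1}\triangle M_{2}$ has no path components; it is a disjoint union of even alternating cycles, and $xz\in M_{1}\setminus M_{2}$, $yw\in M_{2}\setminus M_{1}$ lie on such cycles. Second, and crucially, the case where $xz$ and $yw$ lie on the \emph{same} cycle $C$ cannot be settled by ``recombining alternating segments'' alone: an even cycle has exactly two perfect matchings, namely its $M_{1}$-edges and its $M_{2}$-edges, one containing $xz$ and the other containing $yw$, so no re-selection of edges of $C$ covers $V(C)$ while avoiding both. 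One must bring in one of the edges $xy$ or $yz$ of $G^{*}$: deleting $xz$ and $yw$ splits $C$ into two arcs with an odd number of vertices each; after inserting, say, $xy$, each arc minus the endpoint it loses has even order and is matched by taking every other edge of $C$ (its $M_{1}$-edges on one arc, its $M_{2}$-edges on the other), and $M_{1}$ is used off $C$. This is exactly the step where the hypotheses $xy,yz\in E(G^{*})$, $xz\notin E(G^{*})$, $yw\notin E(G^{*})$ are consumed, and since you explicitly defer it (``the delicate point''), the proof as written is incomplete at its crux. The different-cycle case, by contrast, is routine: replace $M_{1}$ by $M_{2}$ on the cycle containing $xz$.
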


\begin{lem}\label{lem30}\textnormal{\cite{qz08}}
For any integer $n\geq 2$, any two nodes in $V(Q_{n})$ have exactly two common neighbors if they have.
\end{lem}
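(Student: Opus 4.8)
The plan is to argue directly from the coordinate (binary string) representation of the nodes, since the common-neighbour structure of $Q_n$ is governed entirely by Hamming distance; the recursive decomposition $Q_n=G(Q_{n-1}^0,Q_{n-1}^1;M)$ is not needed. Fix two distinct nodes $u=u_{n-1}\cdots u_0$ and $v=v_{n-1}\cdots v_0$ and let $D=\{\,i : u_i\neq v_i\,\}$ be the set of coordinates on which they disagree, so $d:=|D|\geq 1$ is their Hamming distance. A node $w$ is a neighbour of $u$ precisely when $w$ is obtained from $u$ by flipping exactly one coordinate, and likewise for $v$; so the whole task is to identify which single-coordinate flips of $u$ are simultaneously single-coordinate flips of $v$.

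First I would show that a common neighbour, if any exists, is obtained by flipping a coordinate of $D$. Let $w$ be obtained from $u$ by flipping coordinate $j$. If $j\notin D$, then $w$ disagrees with $v$ on all of $D$ (where $w$ still equals $u$) and also on $j$ (there $w_j=\overline{u_j}=\overline{v_j}$), so $w$ is at Hamming distance $d+1\geq 2$ from $v$ and cannot be its neighbour; hence $j\in D$. For $j\in D$, a direct check gives $w_j=v_j$, $w_i=u_i=v_i$ for $i\notin D$, and $w_i=u_i\neq v_i$ for $i\in D\setminus\{j\}$, so $w$ is at Hamming distance $d-1$ from $v$.

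Combining the two computations, a neighbour $w$ of $u$ is a neighbour of $v$ if and only if it is obtained by flipping some $j\in D$ and $d-1=1$, i.e. $d=2$. Therefore: if $u$ and $v$ have no common neighbour the statement holds vacuously, while if they have at least one, then $d=2$, say $D=\{a,b\}$, and the common neighbours are exactly the two (distinct) nodes obtained from $u$ by flipping coordinate $a$ and by flipping coordinate $b$ — exactly two, as claimed.

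The argument is pure coordinate bookkeeping, so I do not expect a real obstacle; the only point needing care is the qualifier ``if they have'' in the statement, which should be read as the implication ``$u,v$ have a common neighbour $\Rightarrow$ they have exactly two'', equivalently the dichotomy that two distinct nodes of $Q_n$ share $0$ common neighbours when their Hamming distance is $\neq 2$ and exactly $2$ when it equals $2$. (If one prefers to stay within the recursive picture used later in the paper, the same result follows by induction on $n$ using $Q_n=G(Q_{n-1}^0,Q_{n-1}^1;M)$: the base case $n=2$ is the $4$-cycle, and in the inductive step one distributes $u$, $v$, and a hypothetical common neighbour among $Q_{n-1}^0$ and $Q_{n-1}^1$ and uses that $M$ matches each node of one sub-cube to a unique node of the other; but the coordinate argument above is shorter and cleaner.)
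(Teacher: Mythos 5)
Your argument is correct and complete: the coordinate computation shows a common neighbour of distinct $u,v$ can only arise by flipping a coordinate in which they differ, which forces Hamming distance $2$, and then the two flips give exactly two distinct common neighbours; the case of distance $\neq 2$ (including adjacent vertices) is correctly handled as the vacuous case of the ``if they have'' qualifier. Note that the paper itself offers no proof of this lemma --- it is quoted from the reference \cite{qz08} --- so there is no in-paper argument to compare with; your direct Hamming-distance bookkeeping is the standard self-contained justification and nothing further is needed.
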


\subsection{The component diagnosability}

The set of test (comparison) outcomes under the PMC model and MM$^{*}$ model was referred to as the \emph{syndrome}. We use the notation $\Omega$ to represent the syndrome of the multiprocessor system. Let $\Omega(\widehat{F})$ represent the set of test (comparison) outcomes produced by the faulty set $\widehat{F}$. Define two different faulty sets of $V(G)$, $\widehat{F_1}$ and $\widehat{F_2}$. If $\Omega(\widehat{F_1})\cap\Omega(\widehat{F_2})=\emptyset$, $\widehat{F_1}$ and $\widehat{F_2}$ are called to be \emph{distinguishable}, that is, $(\widehat{F_1},\widehat{F_2})$ is a \emph{distinguishable pair}; otherwise, $\widehat{F_1}$ and $\widehat{F_2}$ are called to be \emph{indistinguishable} and $(\widehat{F_1},\widehat{F_2})$ is an \emph{indistinguishable pair}. The symmetric difference $(\widehat{F_1}\backslash \widehat{F_2})\cup (\widehat{F_2}\backslash \widehat{F_1})$ between $\widehat{F_1}$ and $\widehat{F_2}$ is denoted by $\widehat{F_1}\bigtriangleup \widehat{F_2}$. The sufficient and necessary condition for two faulty sets $\widehat{F_1}$ and $\widehat{F_2}$ are distinguishable under the PMC model and MM$^{*}$ model was proposed by Dahbura and Masson~\cite{Dahbura1984} and Sengupta and Dahbura~\cite{sd92}, respectively.

\begin{figure}[!htp]
  \centering
  \includegraphics[width=5in]{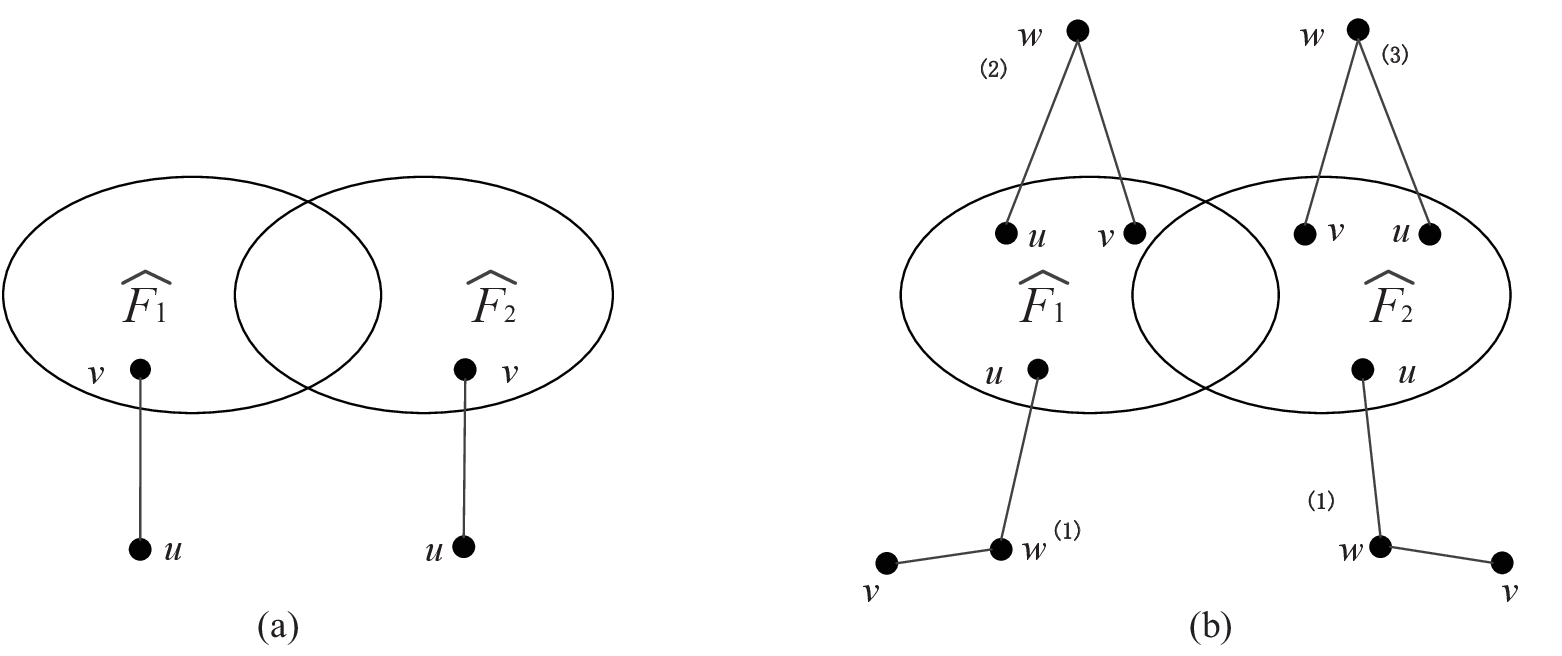}\\
  \caption{(a) An illustration for Lemma 3; (b) An illustration for Lemma 4.}\label{fig1}
\end{figure}

\begin{lem}\label{distinguishablepmc}\textnormal{\cite{Dahbura1984}}
Let $G=(V(G),E(G))$ be a multiprocessor system. For any two distinct sets $\widehat{F_1},\widehat{F_2}\subseteq V(G)$, $\widehat{F_1}$ and $\widehat{F_2}$ are distinguishable under the PMC model if and only if there exists at least one test from $V(G)\backslash (\widehat{F_1}\cup \widehat{F_2})$ to $\widehat{F_1}\bigtriangleup \widehat{F_2}$ \textnormal{(}see Fig.\!\ 1\textnormal{(}a\textnormal{))}.
\end{lem}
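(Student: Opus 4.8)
The plan is to establish the equivalence through the standard observation that $\widehat{F_1}$ and $\widehat{F_2}$ are distinguishable precisely when no single syndrome is simultaneously compatible with "$\widehat{F_1}$ is the fault set" and "$\widehat{F_2}$ is the fault set". For the sufficiency direction I would start from a test $\langle x,y\rangle$ with $x\in V(G)\backslash(\widehat{F_1}\cup\widehat{F_2})$ and $y\in\widehat{F_1}\bigtriangleup\widehat{F_2}$, and assume without loss of generality that $y\in\widehat{F_1}\backslash\widehat{F_2}$. Since $x$ is fault-free under both hypotheses, the outcome of $\langle x,y\rangle$ is forced in each case: under $\widehat{F_1}$ the testee $y$ is faulty, so the outcome must be $1$; under $\widehat{F_2}$ the testee $y$ is fault-free, so the outcome must be $0$. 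Consequently no syndrome can belong to both $\Omega(\widehat{F_1})$ and $\Omega(\widehat{F_2})$, so $\Omega(\widehat{F_1})\cap\Omega(\widehat{F_2})=\emptyset$ and $(\widehat{F_1},\widehat{F_2})$ is a distinguishable pair.

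For the necessity direction I would argue by contraposition: assuming there is \emph{no} test from $V(G)\backslash(\widehat{F_1}\cup\widehat{F_2})$ into $\widehat{F_1}\bigtriangleup\widehat{F_2}$, I would construct one syndrome $\sigma$ compatible with both fault sets, which makes $(\widehat{F_1},\widehat{F_2})$ indistinguishable. The syndrome is defined test by test over all ordered tests $\langle x,y\rangle$ according to where the tester $x$ lies. If $x\in\widehat{F_1}\cap\widehat{F_2}$, the tester is faulty under both hypotheses and the outcome is unreliable in either case, so any value is admissible. If $x\notin\widehat{F_1}\cup\widehat{F_2}$, then by the standing assumption $y\notin\widehat{F_1}\bigtriangleup\widehat{F_2}$, so $y$ has the same fault status under both hypotheses, and we set $\sigma(\langle x,y\rangle)$ to agree with that common status. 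If $x$ lies in exactly one of the two sets, say $x\in\widehat{F_1}\backslash\widehat{F_2}$, then $x$ is faulty under $\widehat{F_1}$ (so any outcome is admissible there) and fault-free under $\widehat{F_2}$, so we set $\sigma(\langle x,y\rangle)$ to be exactly the value dictated by the status of $y$ in $\widehat{F_2}$; the case $x\in\widehat{F_2}\backslash\widehat{F_1}$ is symmetric. By construction $\sigma$ is consistent with $\widehat{F_1}$ and with $\widehat{F_2}$ at the same time, hence $\sigma\in\Omega(\widehat{F_1})\cap\Omega(\widehat{F_2})\neq\emptyset$, and the two sets are indistinguishable.

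The main point requiring care is the case analysis in the necessity direction: one must check that in every case the chosen outcome is admissible under \emph{both} hypotheses, and in particular that the hypothesis "no test from the common fault-free region into $\widehat{F_1}\bigtriangleup\widehat{F_2}$" is exactly what eliminates the only obstructive configuration, namely a tester that is fault-free under both hypotheses probing a vertex whose fault status disagrees between them. I would also remark that the lemma is stated for arbitrary subsets $\widehat{F_1},\widehat{F_2}\subseteq V(G)$ with no cardinality restriction, and the later specialization to faulty sets arising from $h$-component node cuts (Definition \ref{def1}) plays no role in the proof of the characterization itself; it only enters when the lemma is applied to bound $ct_{h+1}(G)$.
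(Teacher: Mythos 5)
The paper does not prove this lemma at all: it is quoted as a known characterization from Dahbura and Masson \cite{Dahbura1984}, so there is no internal proof to compare against. Your argument is the standard and correct one — the forced disagreement of the outcome $\langle x,y\rangle$ with a common fault-free tester gives sufficiency, and the case-by-case construction of a syndrome consistent with both hypotheses (tester in $\widehat{F_1}\cap\widehat{F_2}$, tester outside $\widehat{F_1}\cup\widehat{F_2}$, tester in exactly one set) gives necessity; your observation that the absence of tests from the common fault-free region into $\widehat{F_1}\bigtriangleup\widehat{F_2}$ removes the only obstructive configuration is exactly the crux. You are also right that the later restriction to $(h+1)$-component fault sets is irrelevant to this characterization and only matters when the lemma is invoked in Theorem \ref{the1}.
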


\begin{lem}\label{distinguishablemm}\textnormal{~\cite{sd92}} 
Let $G=(V(G), E(G))$ be a multiprocessor system. For any two distinct sets $\widehat{F_1}, \widehat{F_2}\subseteq V(G)$, $\widehat{F_1}$ and $\widehat{F_2}$ are distinguishable under the MM$^{*}$ model if and only if there is a node $w\in V(G)\backslash (\widehat{F_1}\cup \widehat{F_2})$ such that one of the following conditions holds \textnormal{(}see Fig.\!\ 1\textnormal{(}b\textnormal{)):}

\textnormal{(1)} $|N_{G}(w)- (\widehat{F_1}\cup \widehat{F_2})|\geq 1$ and $|N_{G}(w)\cap(\widehat{F_1}\Delta \widehat{F_2})|\geq 1$\textnormal{;}

\textnormal{(2)} $|N_{G}(w)\cap (\widehat{F_1}-\widehat{F_2})|\geq 2$\textnormal{;}

\textnormal{(3)} $|N_{G}(w)\cap (\widehat{F_2}-\widehat{F_1})|\geq 2$.
\end{lem}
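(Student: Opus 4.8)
The statement is the standard Sengupta--Dahbura distinguishability characterization, and the plan is to prove the two implications separately, using throughout the outcome rules of the MM$^*$ model recalled above: when the comparator $z$ is fault-free, $(x,y)_z=0$ holds exactly if both $x$ and $y$ are fault-free and $(x,y)_z=1$ holds exactly if at least one of $x,y$ is faulty, whereas when $z$ is faulty the outcome of $(x,y)_z$ is unconstrained. I will use that a syndrome belongs to $\Omega(\widehat{F})$ precisely when each of its comparison outcomes obeys these rules with $\widehat{F}$ playing the role of the faulty set, and that by definition $\widehat{F_1}$ and $\widehat{F_2}$ are distinguishable iff $\Omega(\widehat{F_1})\cap\Omega(\widehat{F_2})=\emptyset$.

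For the ``if'' direction, suppose some $w\in V(G)\setminus(\widehat{F_1}\cup\widehat{F_2})$ satisfies one of (1)--(3); since $w$ is fault-free in both scenarios, I produce a single comparison centered at $w$ whose outcome is forced to opposite values. If (1) holds, pick $u\in N_G(w)\setminus(\widehat{F_1}\cup\widehat{F_2})$ and $v\in N_G(w)\cap(\widehat{F_1}\bigtriangleup\widehat{F_2})$, say $v\in\widehat{F_1}\setminus\widehat{F_2}$; then $(u,v)_w=1$ in every syndrome of $\Omega(\widehat{F_1})$ (as $v$ is faulty there) while $(u,v)_w=0$ in every syndrome of $\Omega(\widehat{F_2})$ (as $u$ and $v$ are both fault-free there). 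If (2) holds, pick distinct $u,v\in N_G(w)\cap(\widehat{F_1}\setminus\widehat{F_2})$; then $(u,v)_w=1$ throughout $\Omega(\widehat{F_1})$ and $(u,v)_w=0$ throughout $\Omega(\widehat{F_2})$. Case (3) is symmetric. In every case $\Omega(\widehat{F_1})\cap\Omega(\widehat{F_2})=\emptyset$, so $(\widehat{F_1},\widehat{F_2})$ is distinguishable.

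For the ``only if'' direction I argue the contrapositive: assuming no $w\in V(G)\setminus(\widehat{F_1}\cup\widehat{F_2})$ satisfies any of (1)--(3), I construct a syndrome $\sigma\in\Omega(\widehat{F_1})\cap\Omega(\widehat{F_2})$, which exhibits indistinguishability. The outcome of each comparison $(x,y)_z$ is assigned according to where $z$ lies. If $z\in\widehat{F_1}\cap\widehat{F_2}$, then $z$ is faulty in both scenarios and any value, say $0$, is consistent. If $z\in\widehat{F_1}\setminus\widehat{F_2}$ (respectively $z\in\widehat{F_2}\setminus\widehat{F_1}$), then $z$ is faulty under $\widehat{F_1}$ (respectively $\widehat{F_2}$), so I set $(x,y)_z$ to the value demanded by the rule for the other set, which is then consistent with both. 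The only delicate case is $z\in V(G)\setminus(\widehat{F_1}\cup\widehat{F_2})$, where $z$ is fault-free in both scenarios so $(x,y)_z$ is forced by each of $\widehat{F_1},\widehat{F_2}$ and I must check the two forced values agree. The failure of (1) at $z$ yields $N_G(z)\cap(\widehat{F_1}\bigtriangleup\widehat{F_2})=\emptyset$ or $N_G(z)\subseteq\widehat{F_1}\cup\widehat{F_2}$. In the first case every neighbor of $z$ lies in $\widehat{F_1}$ iff it lies in $\widehat{F_2}$, so for each comparison at $z$ the value forced by $\widehat{F_1}$ equals that forced by $\widehat{F_2}$. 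In the second case the failure of (2) and (3) gives $|N_G(z)\cap(\widehat{F_1}\setminus\widehat{F_2})|\le 1$ and $|N_G(z)\cap(\widehat{F_2}\setminus\widehat{F_1})|\le 1$, so for any two distinct $x,y\in N_G(z)$ at least one lies in $\widehat{F_1}$ and at least one lies in $\widehat{F_2}$; hence both scenarios force $(x,y)_z=1$. Thus $\sigma$ is well defined and lies in $\Omega(\widehat{F_1})\cap\Omega(\widehat{F_2})$, finishing the equivalence.

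I expect the main obstacle to be exactly the subcase $N_G(z)\subseteq\widehat{F_1}\cup\widehat{F_2}$ in the ``only if'' direction, where one must check for \emph{every} pair of neighbors $x,y$ of such a comparator $z$ that $\{x,y\}$ meets both $\widehat{F_1}$ and $\widehat{F_2}$; this is precisely the information extracted from the failure of (2) and (3). The remaining assignment cases and the whole ``if'' direction are routine once the MM$^*$ outcome rules are read as the exact equivalences stated above rather than one-directional implications.
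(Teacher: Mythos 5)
Your proof is correct: the ``if'' direction forces opposite outcomes for a single comparison centered at the fault-free node $w$, and the ``only if'' direction (via the contrapositive) builds a common syndrome, with the only delicate case $z\notin\widehat{F_1}\cup\widehat{F_2}$ handled exactly by the failure of (1)--(3), splitting into $N_G(z)\cap(\widehat{F_1}\bigtriangleup\widehat{F_2})=\emptyset$ and $N_G(z)\subseteq\widehat{F_1}\cup\widehat{F_2}$. The paper states this lemma without proof, citing Sengupta and Dahbura \cite{sd92}, and your argument is essentially the standard one from that source, so there is nothing to reconcile.
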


The concept of $h$-component diagnosability of a system is presented as follows.

\begin{Def}~\label{def2}\textnormal{~\cite{zhang200}} \textnormal{(1)} Let $\widehat{F}\subseteq V(G)$ and $\widehat{F}$ be a fault-set. If $V(G)-\widehat{F}$ has at least $h$ components, then $\widehat{F}$ is called an $h$-component fault-set.

\textnormal{(2)}\;A system $G$ is $h$-component $t$-diagnosable if each distinct pair of $h$-component cuts $\widehat{F_1}$ and $\widehat{F_2}$ of $V(G)$ with $|\widehat{F_1}|\leq t$ and $|\widehat{F_2}|\leq t$ are distinguishable.

\textnormal{(3)} The $h$-component diagnosability, denoted by $ct_{h}(G)$, is defined as the maximum value of $t$ such that $G$ is $h$-component $t$-diagnosable.
\end{Def}

\section{Main results}
In this section, we will determine the $(h+1)$-component diagnosability of general networks.

\begin{thm}~\label{the1}
Let $r\geq 4$ and $1\leq h\leq r-3$. Let $G=(V(G), E(G))$ be a graph with a perfect matching and $|V(G)|\geq 2^{2r-2}$. Assume that $G$ satisfies the following two conditions\textnormal{:}

\textnormal{(a):}\ there exist a node $v$ and a set $A=\{v_{1}, v_{2}, \ldots, v_{h}, v_{h+1}\} \subseteq N_{G}(v)$ with $deg_{G}(x)=r$ for any node $x\in A \cup\{v\}$, such that $|N_{G}(v_{i_{1}})\cap N_{G}(v_{i_{2}})|=2$ $(1\leq i_{1}< i_{2}\leq h+1)$, $|N_{G}(v_{i_{1}})\cap N_{G}(v_{i_{2}})\cap \cdots \cap N_{G}(v_{i_{k}})|=1$ $(1\leq i_{1}< i_{2}<\cdots < i_{k}\leq h+1$ and $k\geq 3)$, and $|N_{G}(v)\cap N_{G}(v_{i})|=0$\textnormal{;}

\textnormal{(b):}\ for any subset $S\subseteq V(G)$ with $|S|\leq hr-\frac{(h-1)(h+2)}{2}-1$, $G-S$ is either connected or it has a component containing at least $|V(G)|-|S|-(h-1)$ nodes.

Then $c t_{h+1}(G)=(h+1)(r-1)-\frac{h(h+1)}{2}+1$ under the PMC model and MM$^{*}$ model.
\end{thm}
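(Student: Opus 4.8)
The plan is to establish the two inequalities $ct_{h+1}(G)\le t^{*}$ and $ct_{h+1}(G)\ge t^{*}$ separately, writing $t^{*}:=(h+1)(r-1)-\frac{h(h+1)}{2}+1$. The first reduction I would make is purely arithmetic: inclusion–exclusion applied to the sizes prescribed in (a) (namely $|N_{G}(v_{i})|=r$, pairwise intersections of size $2$, and all higher intersections of size $1$) gives $|N_{G}(A)|=(h+1)r-2\binom{h+1}{2}+\sum_{k\ge 3}(-1)^{k+1}\binom{h+1}{k}=t^{*}$. I would also note that $A$ is an independent set of $G$ — since $v_{j}\in N_{G}(v)$ while $N_{G}(v)\cap N_{G}(v_{i})=\emptyset$ — so $G[A\cup\{v\}]\cong K_{1,h+1}$ and $N_{G}(v_{i})\subseteq N_{G}(A)$ for every $i$; equivalently every vertex of $A$ has all of its neighbours inside $N_{G}(A)$.

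For the upper bound I would exhibit the pair $F_{1}=N_{G}(A)$, $F_{2}=N_{G}(A)\cup\{v_{1}\}$, so that $F_{1}\bigtriangleup F_{2}=\{v_{1}\}$, $|F_{1}|=t^{*}$ and $|F_{2}|=t^{*}+1$. Because $|N_{G}[A]|=t^{*}+h+1\le (r-1)^{2}<2^{2r-2}\le|V(G)|$, the set $V(G)\setminus N_{G}[A]$ is non-empty; since also $v_{1},\dots,v_{h+1}$ become isolated in $G-F_{1}$ and $v_{2},\dots,v_{h+1}$ become isolated in $G-F_{2}$, both $G-F_{1}$ and $G-F_{2}$ have at least $h+1$ components, so $F_{1},F_{2}$ are $(h+1)$-component fault sets of size $\le t^{*}+1$. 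They are indistinguishable: under the PMC model no test reaches $F_{1}\bigtriangleup F_{2}=\{v_{1}\}$ from $V(G)\setminus(F_{1}\cup F_{2})$ since $N_{G}(v_{1})\subseteq N_{G}(A)$ (Lemma~\ref{distinguishablepmc}); under the MM$^{*}$ model conditions (2) and (3) of Lemma~\ref{distinguishablemm} fail because $F_{1}-F_{2}=\emptyset$ and $|F_{2}-F_{1}|=1$, and condition (1) fails because $v_{1}$ has no neighbour outside $F_{1}\cup F_{2}$. Hence $G$ is not $(h+1)$-component $(t^{*}+1)$-diagnosable and $ct_{h+1}(G)\le t^{*}$.

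For the lower bound I would assume, for contradiction, an indistinguishable pair $(F_{1},F_{2})$ of $(h+1)$-component fault sets with $|F_{1}|,|F_{2}|\le t^{*}$, and set $T=F_{1}\bigtriangleup F_{2}\ne\emptyset$, $I=F_{1}\cap F_{2}$, $C=V(G)\setminus(F_{1}\cup F_{2})$, $a=F_{1}\setminus F_{2}$, $b=F_{2}\setminus F_{1}$. Lemma~\ref{distinguishablepmc} (resp. Lemma~\ref{distinguishablemm}) forces, under PMC, that $N_{G}(T)\subseteq I$, and, under MM$^{*}$, that each $w\in C$ having a neighbour in $C$ has none in $T$ and that every $w\in C$ has at most one neighbour in each of $a,b$; in both models one then gets $c(G-F_{1})=c(G[C])+c(G[b])\ge h+1$ and $c(G-F_{2})=c(G[C])+c(G[a])\ge h+1$ (for MM$^{*}$ up to the harmless attachment of a few degree-$\le 2$ boundary vertices of $C$). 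Also $C\ne\emptyset$, as otherwise $|V(G)|=|I|+|T|\le 2t^{*}<2^{2r-2}$. Now I would split on $|I|$ versus $X:=hr-\frac{(h-1)(h+2)}{2}-1$, which one checks equals $t^{*}-(r-h)$ and equals $|N_{G}(\{v_{1},\dots,v_{h}\})|-1$. If $|I|\le X$, condition (b) applies to $S=I$: since $G-I$ is the disjoint union of $G[C]$ and $G[T]$ and is disconnected, it has a component $R$ with $|R|\ge|V(G)|-|I|-(h-1)$, necessarily inside $C$ because $|T|\le 2t^{*}$ is negligible; comparing sizes forces $|T|\le h-1$ and $c(G[C])\le h-|T|$, and then $c(G[a])\ge h+1-c(G[C])\ge |T|+1>|a|\ge c(G[a])$, a contradiction.

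The remaining case $|I|>X$ — where necessarily $|a|,|b|\le t^{*}-|I|\le r-h-1$, so $F_{1}\bigtriangleup F_{2}$ is small — is the one I expect to be the main obstacle, and it is where condition (a) must do its work. The plan is: choose a union $D$ of $h+1$ components of $G-F_{1}$, so that $F_{1}\supseteq N_{G}(D)$; an isoperimetric argument built from the intersection pattern in (a) together with condition (b) applied to $N_{G}(D')$ for sub-collections $D'\subseteq D$ should give $|N_{G}(D)|\ge t^{*}$, with equality only when $D$ replicates the configuration $\{v_{1},\dots,v_{h+1}\}$, and symmetrically for $F_{2}$; this pins $|F_{1}|=|F_{2}|=t^{*}$ and $F_{1}=N_{G}(D)$, after which $N_{G}(T)\subseteq I\subseteq N_{G}(D)$ and the requirement that $G-F_{2}$ still have $h+1$ components yields the final contradiction. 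The MM$^{*}$ version runs in parallel to the PMC one, the only extra work being to absorb the at most two edges each boundary vertex of $C$ sends into $T$ (conditions (2)–(3) of Lemma~\ref{distinguishablemm}) into the component counts; the perfect-matching hypothesis and $|V(G)|\ge 2^{2r-2}$ serve throughout to discard degenerate small-component configurations. The truly delicate point is to make the constants in (a) and (b) drive the estimate to the sharp value $t^{*}+1$ (and not merely $t^{*}-1$ or $t^{*}$) for $\max\{|F_{1}|,|F_{2}|\}$.
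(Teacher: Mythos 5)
Your upper-bound argument (taking $F_{1}=N_{G}(A)$, $F_{2}=N_{G}(A)\cup\{v_{i}\}$, checking both are $(h+1)$-component fault sets and indistinguishable via Lemmas \ref{distinguishablepmc} and \ref{distinguishablemm}) is exactly the paper's construction and is fine. The lower bound, however, has two genuine gaps, and they are precisely the points where the paper's proof does real work. First, the MM$^{*}$ case: your case $|I|\le X$ rests on the additivity $c(G-F_{2})=c(G[C])+c(G[a])$, i.e.\ on there being no edges between $C$ and $T=F_{1}\bigtriangleup F_{2}$. That follows from indistinguishability only under PMC; under MM$^{*}$ a vertex of $C$ may well have neighbours in $T$ (it is merely forced to be isolated in $G[C]$ and to see at most one vertex of $a$ and one of $b$), so the component counts do not split and the "harmless attachment of boundary vertices" remark does not repair the count. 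The paper's way around this is the place where the perfect-matching hypothesis is actually used: it shows the set $P$ of big-component vertices outside $F_{1}\setminus F_{2}$ would have to be independent, and then bounds $|P|\le o(G-(F_{1}\cup F_{2}))\le|F_{1}\cup F_{2}|$ by Lemma \ref{lem1}, contradicting the size of the big component. Your proposal never deploys the perfect matching concretely, so the MM$^{*}$ half is not established even in your "easy" case.

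Second, the case $|I|>X$ is not a proof but a plan, and its central step cannot work as stated: you want an isoperimetric bound "$|N_{G}(D)|\ge t^{*}$ for every union $D$ of $h+1$ components of $G-F_{1}$, with equality only for the configuration of (a)", derived from condition (a). But (a) is purely existential (it exhibits one vertex $v$ and one set $A$ with the prescribed intersection pattern); it gives no lower bound on neighbourhoods of arbitrary component unions, and in the paper (a) is used only for the upper-bound construction. The paper's lower bound avoids any such isoperimetry and also avoids your case split on $|I|$: it applies the large-component property to $F_{2}$ itself to get a big component $M$ plus at most $h+1$ stray vertices $W$, proves (Claim 1, using PMC/MM$^{*}$ indistinguishability plus the matching argument above) that $V(M)\cap(F_{1}\setminus F_{2})=\emptyset$, hence $F_{1}\bigtriangleup F_{2}\subseteq W\cup(F_{2}\setminus F_{1})$ with no edges to $M$; then $M$ is the unique large component of $G-(F_{1}\cap F_{2})$, whose small components total at least $h+1$ vertices, while $|F_{1}\cap F_{2}|\le t^{*}-1$ and the large-component property applied to $F_{1}\cap F_{2}$ (at level $h+1$) caps the small part at $h$ vertices — a contradiction. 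In short: what is missing from your attempt is Claim 1 (which collapses the troublesome large-$|I|$ situation) and the perfect-matching/odd-component counting that makes Claim 1 go through under MM$^{*}$; the appeal to condition (a) in the hard case is a wrong turn.
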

\begin{proof}
By condition (a), for $r\geq 4$ and $1\leq h\leq r-3$, there exist a node $v$ and a set $A=\{v_{1}, v_{2}, \ldots, v_{h}, v_{h+1}\} \subseteq N_{G}(v)$ with $deg_{G}(x)=r$ for any node $x\in A \cup\{v\}$.
Let $\widehat{F_1}=N_{G}(A)$ and $\widehat{F_2}=\widehat{F_1} \cup\{v_{h+1}\}$ (see Fig. 2).

\begin{figure}[!htp]
  \centering
  \includegraphics[width=3.5in]{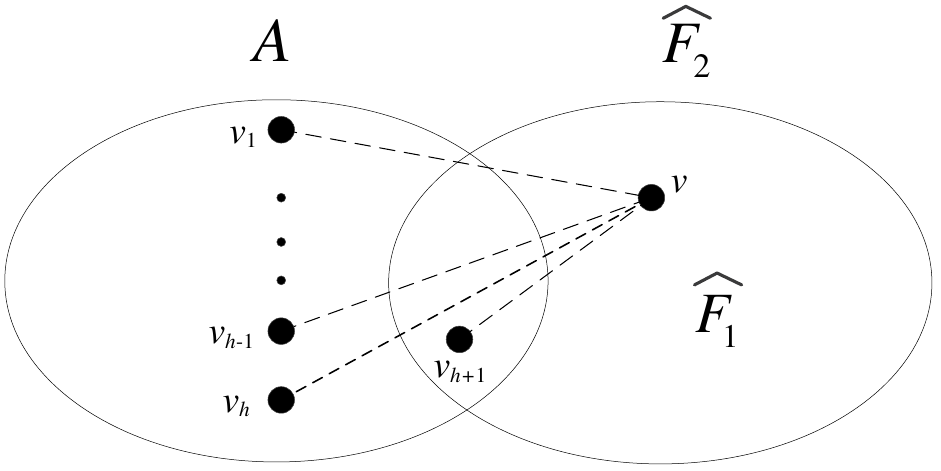}\\
  \caption{$\widehat{F_1}$ and $\widehat{F_2}$.}\label{fig2}
\end{figure}

By condition (a), we have
\[
\begin{aligned}
|\widehat{F_1}| &=(h+1)(r-1)-\binom{h+1}{2}+1 \\\
&=(h+1)(r-1)-\frac{h(h+1)}{2}+1.
\end{aligned}
\]

Then $|\widehat{F_2}|=(h+1)(r-1)-\frac{h(h+1)}{2}+2.$ Since $\widehat{F_1}$ consists of all neighbors of every node in $A$, $G-\widehat{F_1}$ has at least $|A|=h+1$ isolated nodes. Similarly, $G-\widehat{F_2}$ has at least $h$ isolated nodes. When $r\geq 4$ and $1\leq h\leq r-3$, $|V(G)|-|\widehat{F_2}|-h \geq2^{2r-2}-((h+1)(r-1)-\frac{h(h+1)}{2}+2)-h > 0.$ Then both $G-\widehat{F_1}$ and $G-\widehat{F_2}$ have at least $h+1$ components. Therefore, by Definition \ref{def1}, both $\widehat{F_1}$ and $\widehat{F_2}$ are $(h+1)$-component cuts. Since $\widehat{F_1} \triangle \widehat{F_2}=\{v_{h+1}\}$ and $N_{G}(v_{h+1}) \subseteq \widehat{F_1},$ there exists no edge between $\widehat{F_1} \triangle \widehat{F_2}$ and $\overline{\widehat{F_1} \cup \widehat{F_2}}.$ By Lemma \ref{distinguishablepmc} and Lemma \ref{distinguishablemm}, $\widehat{F_1}$ and $\widehat{F_2}$ are indistinguishable under the $\mathrm{PMC}$ model and $\mathrm{MM}^{*}$ model. By Definition \ref{def2}, $c t_{h+1}(G) \leq(h+1)(r-1)-\frac{h(h+1)}{2}+1.$ 

Now, we prove that $ct_{h+1}(G) \geq(h+1)(r-1)-\frac{h(h+1)}{2}+1$ holds. On the contrary, we assume that $ct_{h+1}(G) \leq(h+1)(r-1)-\frac{h(h+1)}{2}.$ That is, there exist two distinct $(h+1)$-component cuts $\widehat{F_1}$ and $\widehat{F_2}$ such that $|\widehat{F_1}|, |\widehat{F_2}| \leq(h+1)(r-1)-\frac{h(h+1)}{2}+1$ where $\widehat{F_1}$ and $\widehat{F_2}$ are indistinguishable. Without loss of generality, let
$\widehat{F_2} \backslash \widehat{F_1} \neq \emptyset$.

Note that $|\widehat{F_1}|, |\widehat{F_2}| \leq (h+1)(r-1)-\frac{h(h+1)}{2}+1 \leq (h+2)r-\frac{(h+1)(h+4)}{2}-1$ with $r\geq 4$ and $1\leq h\leq r-3$. Thus, by condition (b), $G-\widehat{F_2}$ has one large component $M$ plus a number of small components with at most $(h+2)-1=h + 1$ nodes. Let $W=V(G-\widehat{F_2}-M)$, then $|W| \leq h+1$. Then we have

\[
\begin{aligned}
|V(M)| &=|V(G)|-|\widehat{F_2}|-|W| \\
& \geq 2^{2r-2}-((h+1)(r-1)-\frac{h(h+1)}{2}+1)-(h+1).
\end{aligned}
\]

Let $f_{r}(h)=2^{2r-2}-((h+1)(r-1)-\frac{h(h+1)}{2}+1)-(h+1)$ with $r\geq 4$ and $1\leq h\leq r-3$. We can obtain that $\frac{\mathrm{d} f_{r}(h)}{\mathrm{d} h}=-r+h+ 1 / 2.$ Then $f_{r}(h)$ is an decreasing function when $h \leq r-3$. Thus $f_{r}(h) \geq f_{r}(r-3)=2^{2r-2}-r^{2}/2-r/2+2$. Then $|V(M)|\geq 2^{2r-2}-r^{2}/2-r/2+2$.

When $r \geq 4$ and $1\leq h\leq r-3$, the following inequality holds
\[
\begin{aligned}
|V(M)|-|\widehat{F_1}| & \geq2^{2r-2}-2((h+1)(r-1)-\frac{h(h+1)}{2}+1)-(h+1) > 0. \\
\end{aligned}
\]

\textbf{Claim1.} $V(M) \cap(\widehat{F_1} \backslash \widehat{F_2})=\emptyset$.

Since $|V(M)|-|\widehat{F_1}|>0$, we have $V(M) \backslash(\widehat{F_1} \backslash \widehat{F_2}) \neq \emptyset.$ On the contrary, we assume that $V(M) \cap(\widehat{F_1} \backslash \widehat{F_2}) \neq \emptyset.$ Let $L=V(M) \cap(\widehat{F_1} \backslash \widehat{F_2})$ and $P=V(M) \backslash(\widehat{F_1} \backslash \widehat{F_2}).$ Since $M$ is a connected component, there exist edges between $L$ and $P.$ Let $U=\{w \in P: w$ has a neighbor in $L\}$ and $O=P \backslash U$ (see Fig. 3).

Firstly, we prove that Claim 1 holds under the PMC model. As $L \subseteq \widehat{F_1} \backslash \widehat{F_2} \subseteq \widehat{F_1} \triangle \widehat{F_2} \text { and } P \subseteq \overline{\widehat{F_1} \cup \widehat{F_2}},$ there exist edges between $\widehat{F_1} \triangle \widehat{F_2}$ and $\overline{\widehat{F_1} \cup \widehat{F_2}}.$ By Definition \ref{def2}, $\widehat{F_1}$ and $\widehat{F_2}$ are distinguishable under the $\mathrm{PMC}$ model, a contradiction.

Next we prove that Claim 1 holds under the $\mathrm{MM}^{*}$ model. We assume that $O \neq \emptyset.$ Since $M$ is a connected component, there exists a node $w \in U$ such that $w$ has a neighbor $w^{\prime} \in O.$ It's easy to see that $w \in U$ also has a neighbor in $w^{\prime  \prime} \in L$ (see Fig.\;3(a)). By Definition 2, $\widehat{F_1}$ and $\widehat{F_2}$ are distinguishable under the $\mathrm{MM}^{*}$ model, a contradiction. Thus, we have $O=\emptyset$ and $U=P$.

\begin{figure}[!htp]
  \centering
  \includegraphics[width=3.5in]{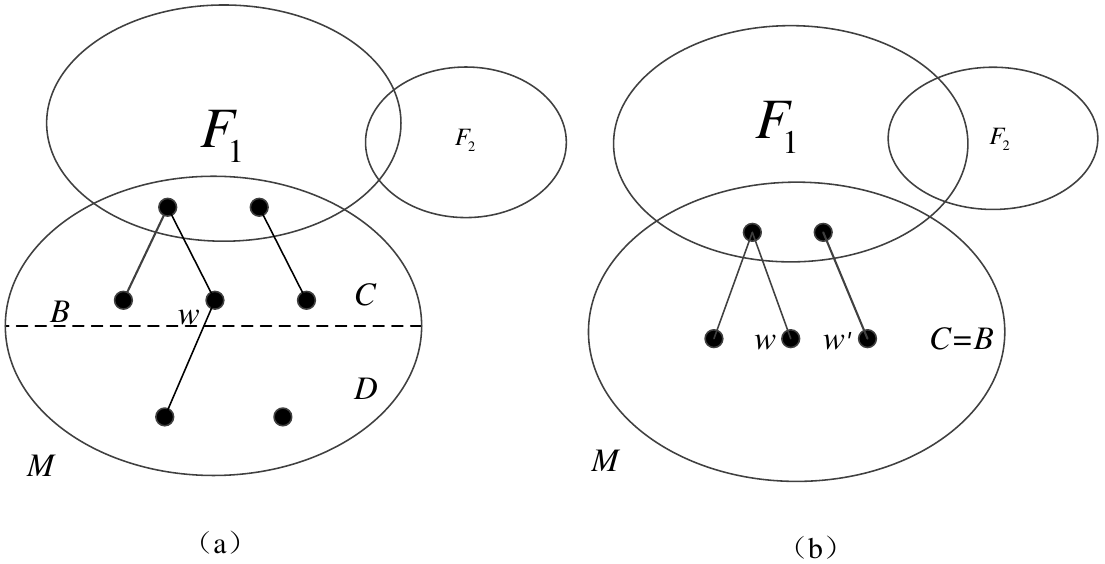}\\
  \caption{An illustration for the proof of Claim 1.}\label{fig3}
\end{figure}

Suppose that $w \in P$ has a neighbor $u \in P.$ By Definition \ref{def2}, $\widehat{F_1}$ and $\widehat{F_2}$ are distinguishable under the $\mathrm{MM}^{*}$ model, a contradiction. Then $P$ is an independent set of $G-(\widehat{F_1} \cup \widehat{F_2})$  (see Fig. 3(b)). By Lemma \ref{lem1}, $|P| \leq o(G-(\widehat{F_1} \cup \widehat{F_2})) \leq|\widehat{F_1} \cup \widehat{F_2}| \leq|\widehat{F_1}|+|\widehat{F_2}| .$ Hence, we have that
\[
|V(M)| \leq|\widehat{F_1} \backslash \widehat{F_2}|+|P| \leq 2|\widehat{F_1}|+|\widehat{F_2}| \leq 3(h+1)(r-1)-\frac{3h(h+1)}{2}+3.
\]
Then, we have
\[
2^{2r-2}-r^{2}/2-r/2+2 \leq|V(M)| \leq 3(h+1)(r-1)-\frac{3h(h+1)}{2}+3.
\]
This is a contradiction when $r \geq 4$ and $1\leq h\leq r-3.$ Then the claim holds. 

By Claim 1,  we can obtain that $\widehat{F_1} \backslash \widehat{F_2} \subseteq W$ and $|\widehat{F_1} \backslash \widehat{F_2}| \leq|W| \leq h+1 .$ Note that there exists no edge between $W$ and $V(M).$ Therefore, there exists no edge between $\widehat{F_1} \backslash \widehat{F_2}$ and $V(M).$ By the symmetry between $\widehat{F_1}$ and $\widehat{F_2},$ we can also obtain that $|\widehat{F_2} \backslash \widehat{F_1}| \leq h+1$ and there exists no edge between $\widehat{F_2} \backslash \widehat{F_1}$ and $V(M).$ Observe that $V(G-(\widehat{F_1} \cap \widehat{F_2}))$ consists of three parts $W, \widehat{F_2} \backslash \widehat{F_1}$ and $V(M) .$ Then $M$ is a component of $G-(\widehat{F_1} \cap \widehat{F_2})$ as well. When $r \geq 4$ and $1\leq h\leq r-3$, we have that

\[
\begin{aligned}
|V(M)|-|W|-|\widehat{F_2} \backslash \widehat{F_1}| & \geq(2^{2r-2}-r^{2}/2-r/2+2)-2(h+1) \\
& \geq(2^{2r-2}-r^{2}/2-r/2+2)-2(r-2) \\
&=2^{2r-2}-\frac{r^{2}}{2}-\frac{5r}{2}+6>0.
\end{aligned}
\]

Therefore, $M$ is the largest component of $G-(\widehat{F_1} \cap \widehat{F_2}).$ Recall that $W=V(G-\widehat{F_2}-M)$. Since $\widehat{F_2}$ is a $(h+1)$-component cut and $M$ is a connected component, the subgraph $G[W]$ contains at least $h+1-1=h$ components. Then $|W| \geq h$ and so $|W \cup(\widehat{F_2} \backslash \widehat{F_1})| \geq h+1.$ Then $G-(\widehat{F_1} \cap \widehat{F_2})$ has a large component $M$ and small components containing at least $h+1$ nodes.

Since $|\widehat{F_2} \backslash \widehat{F_1}| \geq 1$, $|\widehat{F_1} \cap \widehat{F_2}| \leq|\widehat{F_2}|-1 \leq (h+1)(r-1)-\frac{h(h+1)}{2}$. Let $g=h+1$, then $|\widehat{F_1} \cap \widehat{F_2}| \leq gr-\frac{(g-1)(g+2)}{2}-1$. By condition (b), $G-(\widehat{F_1} \cap \widehat{F_2})$ has a large component $M$ and a number of small components with at most $g-1=h$ nodes, a contradiction.
\end{proof}

\section{Applications to some well-known networks}
In Section 3, we show that the $(h+1)$-component diagnosability of general networks is $(h+1)(r-1)-\frac{h(h+1)}{2}+1$ under the PMC model and MM$^{*}$ model, where the definition of $r$ is given in Theorem \ref{the1}, i.e., $r$ is defined as the degree for each node of the node set satisfying the condition (a) in Theorem 1, the network is $r$-regular for the special case. In this section, we will apply the Theorem 1 to evaluate the $(h+1)$-component diagnosability of  some well-known networks, including complete cubic networks, hierarchical cubic networks, generalized exchanged hypercubes, dual-cube-like networks, hierarchical hypercubes, Cayley graphs generated by transposition trees \textnormal{(}except star graphs\textnormal{)}, and DQcube as well.

\subsection{The complete cubic network}
The complete cubic network is a famous network which generalizes the hierarchical cubic network. Therefore, we first review the definition and some available properties of the hierarchical cubic network as follows.

\begin{Def}\textnormal{~\cite{bakk12}}
An $n$-dimensional hierarchical cubic network $H\!C\!N_{n}$, consists of $2^{n}$ $n$-dimensional hypercubes $Q_{n}$, named clusters. Each node $u$ of $H\!C\!N_{n}$ is
uniquely associated with a pair of two $n$-bit sequences, where the first $n$-bit sequence identifies the cluster of $u$ and the second $n$-bit sequence identifies the position of $u$ inside its cluster. For one node $u=(c, p)\in H\!C\!N_{n}$, let $C_{c}$ be the cluster, which is an $n$-dimensional hypercube identified by $c$. Two nodes $(c, p)$ and $(d, q)$ are adjacent in $H\!C\!N_{n}$ if and only if one of the following conditions is satisfied \textnormal{(}where $\overline{x}$ is the binary complement of a bit sequence $x$\textnormal{)}.

\textnormal{(1)}\ if $c=d$, then $H(p,q)=1$\textnormal{;}

\textnormal{(2)}\ if $c\neq d$ and $c=p$, then $d=q=\overline{c}$\textnormal{;}

\textnormal{(3)}\ if $c\neq d$ and $c\neq p$, then $c=q$ and $p=d$\textnormal{,}
where $H(p,q)$ denotes the Hamming distance between two nodes $p$ and $q$.
\end{Def}

An $n$-dimensional hierarchical cubic network $H\!C\!N_{n}$ is $(n+1)$-regular with $2^{2n}$ nodes and $(n+1)2^{2n-1}$ edges. The edges of type (1) are referred to as cube edges, and the edges of types (2) and (3) are referred to as cross edges. For one pair of clusters $C_{c}$ and $C_{d}$, if $c=\overline{d}$, then there exist two cross edges between them; otherwise, there exists one cross edge. Clearly, there exists a perfect matching between clusters in $H\!C\!N_{n}$ (see $H\!C\!N_{2}$ in Fig. 4(a)).

The complete cubic network was introduced by Cheng et al.~\cite{cqs15}. In the following, we present the definition and some basic properties of complete cubic networks.

\begin{Def}\textnormal{~\cite{cqs15}}\label{def4}
The $n$-dimensional complete cubic network $CCN(n, f )$ for $n\geq 2$ is a collection of $2^{n}$ hypercubes $Q_{n}$, called clusters,
where the bijection function $f$ specifies a perfect matching on the nodes such that for any node $u$ in a cluster, $f(u)$ maps to another node in a different cluster. Besides the normal cube edges that connect all the nodes in each and every cluster, we refer to such an edge $(u, f(u))$ as a cross edge. Moreover, the matching $f$ needs to satisfy the following property that there exists a pairing $P$ of the clusters such that
\begin{itemize}
\item there are exactly two cross edges between two clusters $C_{c}$ and $C_{d}$ if they form a pair in $P$, and
\item there is exactly one cross edge between $C_{c}$ and $C_{d}$ if they do not form a pair in $P$.
\end{itemize}
\end{Def}

An $n$-dimensional complete cubic network $C\!C\!N_{n}$ is triangle-free and $(n+1)$-regular with $2^{2n}$ nodes and $(n+1)2^{2n-1}$ edges. Fig. 4(b)  shows the $2$-dimensional complete cubic network $C\!C\!N_{2}$.

\begin{figure}[htbp]
\centering
\subfigure[]{
\begin{minipage}[t]{0.5\linewidth}
\centering
\includegraphics[width=3in]{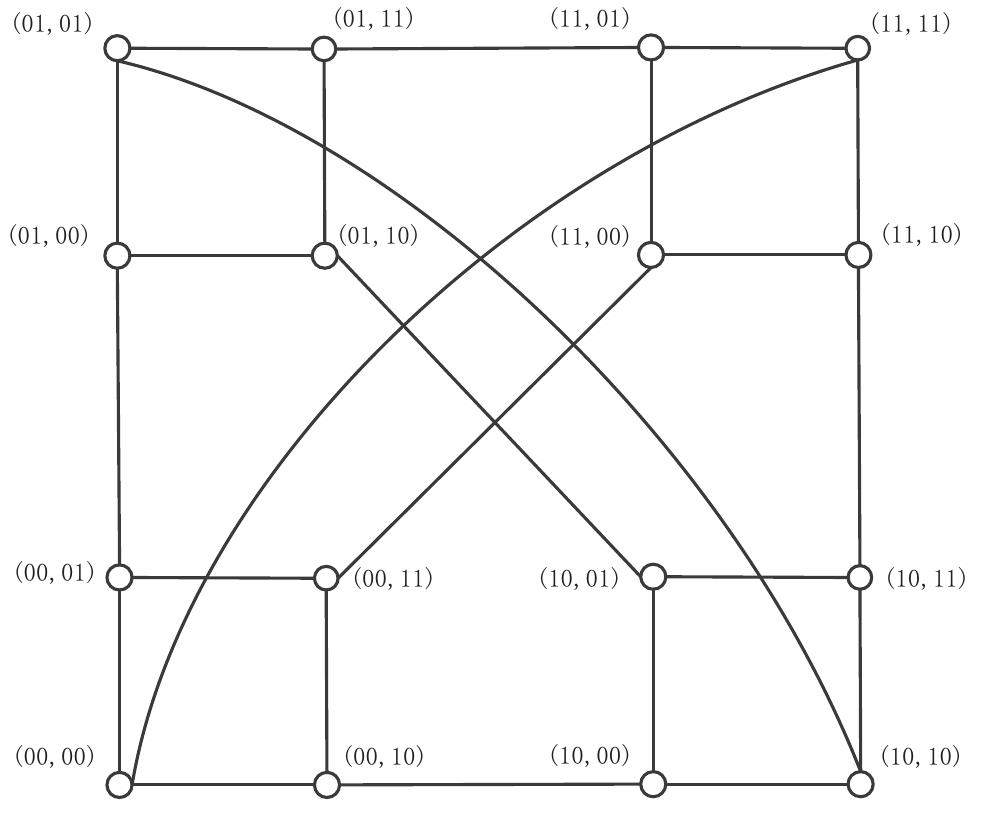}
\end{minipage}%
}%
\subfigure[]{
\begin{minipage}[t]{0.5\linewidth}
\centering
\includegraphics[width=3in]{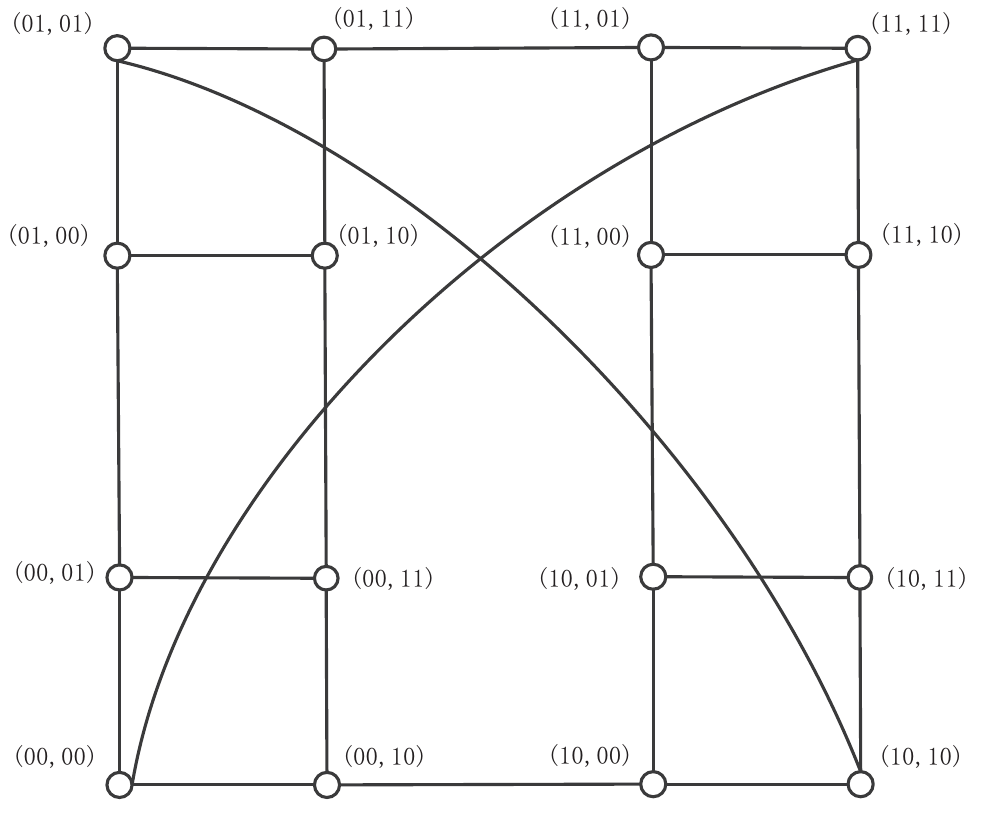}
\end{minipage}%
}%
\centering
\caption{(a)The illustration of $H\!C\!N_{2}$;  (b)The illustration of $C\!C\!N_{2}$.}
\end{figure}

\begin{lem}\textnormal{~\cite{cqs14}}\label{lem4}
For any integers $n\geq 2$ and $1\leq h\leq n$, for any $S\subseteq V(C\!C\!N_{n})$ with $|S|\leq (n+1)h-\frac{(h-1)(h+2)}{2}-1$, $C\!C\!N_{n}-S$ is either connected or it has a component containing at least $2^{2n}-|S|-(h-1)$ nodes.
\end{lem}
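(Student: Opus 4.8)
The plan is to recast the statement as a vertex–isoperimetric inequality and then exploit the recursive cluster structure of $C\!C\!N_{n}$. Write $G=C\!C\!N_{n}$, and suppose $G-S$ is disconnected; let $M$ be a largest component of $G-S$ and put $W=V(G)\setminus(S\cup V(M))$, the union of all the other components. No edge joins $W$ to $V(M)$ (such an edge would put a vertex of a small component into $M$), so $N_{G}(W)\subseteq S$ and hence $|S|\ge|N_{G}(W)|$. Since $|V(M)|=2^{2n}-|S|-|W|$, the assertion ``$|V(M)|\ge 2^{2n}-|S|-(h-1)$'' is precisely ``$|W|\le h-1$''. Thus it suffices to show: whenever $W\subseteq V(G)$ is a union of components of $G-N_{G}(W)$ other than a largest one, with $|W|\ge h$, then $|N_{G}(W)|\ge(n+1)h-\frac{(h-1)(h+2)}{2}$. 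Since $t\mapsto(n+1)t-\frac{(t-1)(t+2)}{2}$ is increasing on $[1,n]$, one may pass to a sub-union and reduce to $h\le|W|\le n$ (proving the bound with $h$ replaced by $|W|$), together with a crude estimate for $|W|>n$, where the target is merely the constant $(n+1)n-\frac{(n-1)(n+2)}{2}$.

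First I would set up the standard decomposition $G=(C_{0},\ldots,C_{2^{n}-1})$, each cluster $C_{i}\cong Q_{n}$ (so $n$-connected and triangle-free), the cross edges forming a perfect matching on $V(G)$ with at most two cross edges between any pair of clusters. With $W_{i}=W\cap V(C_{i})$ and $S_{i}=N_{G}(W)\cap V(C_{i})$ I would distinguish a dispersed case and a concentrated case. In the dispersed case, no single cluster contains most of $W$: here $|N_{G}(W)|$ is bounded below cluster by cluster, using that $Q_{n}$ is $n$-connected and the hypercube analogue of the present lemma (which I would either cite or establish by the same argument specialised to $Q_{n}$); the cross matching, being incident to every vertex and far larger than $|S|$, forces essentially all surviving vertices into one component, and the summed lower bound meets the target. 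In the concentrated case, $W$ lives essentially inside one cluster $C_{j}$: apply the $Q_{n}$ version of the lemma to $W_{j}$ inside $C_{j}$ to obtain a giant part of $C_{j}-S_{j}$ together with at most $|W_{j}|-1$ stray vertices; many of the $2^{n}$ cross edges leaving $C_{j}$ avoid $S$ and attach that giant part to the global component $M$, and the budget $|S|\le(n+1)h-\frac{(h-1)(h+2)}{2}-1$ is exactly what keeps the stray vertices (together with the few vertices of $W$ outside $C_{j}$) at $\le h-1$.

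To confirm the bound is sharp, take a configuration as in condition (a) of Theorem \ref{the1}: vertices $v_{1},\ldots,v_{h}$ that are common cube-neighbours of a single vertex $v$ inside one cluster, so $\{v_{1},\ldots,v_{h}\}$ is independent, $|N_{G}(v_{i_{1}})\cap N_{G}(v_{i_{2}})|=2$, and $|N_{G}(v_{i_{1}})\cap\cdots\cap N_{G}(v_{i_{k}})|=1$ for $k\ge 3$. As $deg_{G}(x)=n+1$ for every vertex $x$ and no $v_{j}$ lies in any $N_{G}(v_{i})$, inclusion–exclusion gives $|N_{G}(\{v_{1},\ldots,v_{h}\})|=h(n+1)-2\binom{h}{2}+\sum_{k=3}^{h}(-1)^{k+1}\binom{h}{k}$, which simplifies to $(n+1)h-\frac{(h-1)(h+2)}{2}$; so with $|S|$ equal to this value the deleted set really does leave $h$ components beyond the giant one, and the threshold cannot be lowered.

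I expect the concentrated case to be the main obstacle: one must argue rigorously that the surviving part of a heavily deleted cluster does reattach to the global giant component across the sparse residue of cross edges, and then verify that the arithmetic — combining ``$Q_{n}$ is $n$-connected'', the $Q_{n}$ giant-component lemma, and the distribution of $|S|$ over the clusters — collapses to exactly $(n+1)h-\frac{(h-1)(h+2)}{2}-1$ with no slack for every $h\in\{1,\ldots,n\}$. A secondary nuisance is the passage from $|W|\ge h$ to $|W|\le n$, i.e.\ ruling out a second sizeable clump of $W$ using only the stated size budget and the cluster structure of $C\!C\!N_{n}$.
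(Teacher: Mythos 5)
The paper offers no internal proof of Lemma \ref{lem4} to compare against: it is imported from Cheng, Qiu and Shen \cite{cqs14} as one of their ``linearly many faults'' results, whose proof there is a careful case analysis/induction over the cluster decomposition. So your attempt has to stand on its own, and as written it is a plan rather than a proof. The solid parts are the opening reformulation (with $M$ a largest component of $C\!C\!N_{n}-S$ and $W$ the union of the others, the conclusion is exactly $|W|\le h-1$, and $N_{G}(W)\subseteq S$) and the sharpness computation, whose inclusion--exclusion does simplify to $(n+1)h-\frac{(h-1)(h+2)}{2}$, consistent with the set $\widehat{F_1}$ constructed in Theorem \ref{the1}; but sharpness is not what the lemma asserts, so that paragraph carries no weight toward it.

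The genuine gaps are in the middle. (i) In both the ``dispersed'' and the ``concentrated'' case you never actually derive a lower bound on $|N_{G}(W)|$: sentences such as ``the summed lower bound meets the target'' and ``the budget is exactly what keeps the stray vertices at $\le h-1$'' are restatements of the conclusion, not arguments, and your final paragraph concedes exactly this. (ii) The per-cluster input you lean on --- the $Q_{n}$ analogue of the lemma --- is neither stated precisely nor proved nor cited, and even granting it, applying it inside a cluster $C_{j}$ requires control of $|S\cap V(C_{j})|$; how the global budget $(n+1)h-\frac{(h-1)(h+2)}{2}-1$ distributes over the $2^{n}$ clusters is precisely the crux of the Cheng--Qiu--Shen argument and is absent here. (iii) The reduction ``pass to a sub-union so that $h\le|W|\le n$'' is not available: $W$ is a union of components and a single non-largest component may already exceed $n$ vertices, so you cannot trim into the window, and the promised ``crude estimate'' for $|W|>n$ is never supplied. (iv) The reattachment of the surviving bulk of a heavily damaged cluster to the global component is plausible (a cluster emits $2^{n}$ cross edges while $|S|=O(nh)$), but it must be written out together with an account of the vertices of $W$ outside $C_{j}$. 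Until (i)--(iii) are filled in and (iv) is written down, the proposal does not establish the lemma.
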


\begin{thm}
Let $n\geq 3$ and $1\leq h\leq n-2$, then the $(h+1)$-component diagnosability of $C\!C\!N_{n}$ is $c t_{h+1}(C\!C\!N_{n})=(h+1)n-\frac{h(h+1)}{2}+1$ under the PMC model and MM$^{*}$ model.
\end{thm}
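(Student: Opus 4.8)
The plan is to verify that $C\!C\!N_{n}$ satisfies the two hypotheses of Theorem \ref{the1} with $r = n+1$, and then invoke that theorem directly. Since $C\!C\!N_{n}$ is $(n+1)$-regular, setting $r = n+1$ turns the target formula $(h+1)(r-1) - \frac{h(h+1)}{2} + 1$ into exactly $(h+1)n - \frac{h(h+1)}{2} + 1$, and the range $1 \le h \le r-3 = n-2$ matches the stated range. The cardinality requirement $|V(G)| = 2^{2n} \ge 2^{2r-2} = 2^{2n}$ holds with equality, so that is immediate. The existence of a perfect matching in $C\!C\!N_{n}$ is part of Definition \ref{def4} (the cross edges $(u,f(u))$ already form one). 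Condition (b) is precisely Lemma \ref{lem4} after the substitution $r = n+1$: for $|S| \le (n+1)h - \frac{(h-1)(h+2)}{2} - 1 = rh - \frac{(h-1)(h+2)}{2} - 1$, the graph $C\!C\!N_{n} - S$ is connected or has a component of order at least $2^{2n} - |S| - (h-1) = |V(G)| - |S| - (h-1)$.

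The substantive work is checking condition (a): I need to exhibit a node $v$ and $h+1$ neighbors $v_1, \ldots, v_{h+1} \in N(v)$ such that any two of them share exactly two common neighbors, any three or more share exactly one common neighbor, and $v$ itself shares no neighbor with any $v_i$. The natural choice is to take $v$ and all of $A = \{v_1,\dots,v_{h+1}\}$ inside a single cluster $Q_n$, using $v_i$ to be the $n$ cube-neighbors of $v$ (we only need $h+1 \le n$, which holds since $h \le n-2 < n$). Within the hypercube $Q_n$, Lemma \ref{lem30} gives that two nodes with a common neighbor have exactly two common neighbors; two cube-neighbors $v_i, v_j$ of $v$ have $v$ as a common neighbor, so inside the cluster they share exactly two. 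But $v_i$ also has a cross-neighbor $f(v_i)$ outside the cluster, and I must confirm these cross-neighbors contribute nothing extra to the common-neighbor counts — this follows because $f$ is a perfect matching (each node has a unique cross-neighbor) and distinct nodes in the same cluster map to distinct clusters or at least distinct nodes, so no external node is a common neighbor of two of the $v_i$. For three or more of the $v_i$'s, their pairwise common neighbors inside $Q_n$ are $v$ together with the "opposite corner" node, and one checks that only $v$ is common to all of them (the opposite-corner nodes differ), giving exactly one common neighbor; again the cross-neighbors add nothing. Finally $v$ and $v_i$ are adjacent, and in a triangle-free graph adjacent vertices share no common neighbor, so $|N(v) \cap N(v_i)| = 0$ — and $C\!C\!N_{n}$ is triangle-free as noted after Definition \ref{def4}.

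The main obstacle I anticipate is the bookkeeping in condition (a) around the cross edges: one must be careful that choosing $v$ and $A$ inside one cluster does not accidentally create an extra common neighbor through the cross-edge structure, particularly when the cluster is paired in $P$ with another cluster and hence has two cross edges to it. I would handle this by noting that each $v_i$ has exactly one cross-neighbor (since $f$ is a bijection/perfect matching), so the only way two of the $v_i$ could have a common neighbor outside the cluster is if they mapped to the same external node, which is impossible for distinct $v_i$; hence all common-neighbor counts are governed entirely by the hypercube structure, where Lemma \ref{lem30} and an elementary parity/coordinate argument settle everything. Once condition (a) is in place, the theorem's conclusion $ct_{h+1}(C\!C\!N_{n}) = (h+1)n - \frac{h(h+1)}{2} + 1$ under both the PMC and MM$^{*}$ models follows immediately by applying Theorem \ref{the1}.
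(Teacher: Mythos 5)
Your proposal is correct and follows essentially the same route as the paper: verify the hypotheses of Theorem \ref{the1} with $r=n+1$, using Lemma \ref{lem30} plus the uniqueness of each node's cross-neighbor for condition (a) and Lemma \ref{lem4} for condition (b). The only cosmetic difference is that the paper selects the set $A$ inside one half $X_{1}\cong Q_{n-1}^{0}$ of a cluster while you take cube-neighbors of $v$ in the whole cluster, and you make the triangle-freeness argument for $|N(v)\cap N(v_{i})|=0$ explicit where the paper leaves it implicit.
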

\begin{proof}
By the definition of $C\!C\!N_{n}$, there exists a perfect matching between clusters in $C\!C\!N_{n}$ and $|V(C\!C\!N_{n})|=2^{2n}.$  $C\!C\!N_{n}$ consists of $2^{n}$ clusters, denoted as $C_{1}, C_{2}, \ldots, C_{2^{n}}$, and each cluster is isomorphic to $Q_{n}$. Since $C_{1}\cong Q_{n}$, let $C_{1}=G(X_{1}, Y_{1}; M)$ with $X_{1}\cong Q_{n-1}^{0}$ and $Y_{1}\cong Q_{n-1}^{1}$. Choosing an arbitrary node $v$ in $X_{1}$ and a node set $A=\{v_{1}, v_{2}, \ldots, v_{h}, v_{h+1}\} \subseteq N_{X_{1}}(v)$, we have that $deg_{C\!C\!N_{n}}(x)=n+1$ for any node $x\in A \cup\{v\}$. Note that $|V(C\!C\!N_{n})|$ is exactly equal to $2^{2n}$. Since $C_{1}\cong Q_{n}$, by Lemma \ref{lem30}, for any two nodes of $C_{1}$, if they have common neighbors, then they have exactly two common neighbors in $C_{1}$. We have $|N_{C_{1}}(v_{i_{1}})\cap N_{C_{1}}(v_{i_{2}})|=2$ $(1\leq i_{1}< i_{2}\leq h+1)$ and $|N_{C_{1}}(v_{i_{1}})\cap N_{C_{1}}(v_{i_{2}})\cap \cdots \cap N_{C_{1}}(v_{i_{k}})|=1$ $(1\leq i_{1}< i_{2}<\cdots < i_{k}\leq h+1$ and $k\geq 3)$. By the definition of $C\!C\!N_{n}$, there are two cross edges between one pair of clusters $C_{c}$ and $C_{d}$ if and only if $c=\overline{d}$; otherwise, there is only one cross edge, and cross edges have no common node. Then $|N_{C\!C\!N_{n}-C_{1}}(v_{i_{1}})\cap N_{C\!C\!N_{n}-C_{1}}(v_{i_{2}})|=0$ $(1\leq i_{1}< i_{2}\leq h+1)$. Thus, $|N_{C\!C\!N_{n}}(v_{i_{1}})\cap N_{C\!C\!N_{n}}(v_{i_{2}})|=2$ $(1\leq i_{1}< i_{2}\leq h+1)$ and $|N_{C\!C\!N_{n}}(v_{i_{1}})\cap N_{C\!C\!N_{n}}(v_{i_{2}})\cap \cdots \cap N_{C\!C\!N_{n}}(v_{i_{k}})|=1$ $(1\leq i_{1}< i_{2}<\cdots < i_{k}\leq h+1$ and $k\geq 3)$. Then $C\!C\!N_{n}$ satisfies the condition (a) of Theorem \ref{the1}. Moreover, by Lemma \ref{lem4}, $C\!C\!N_{n}$ satisfies the condition (b) of Theorem \ref{the1}. Thus, by Theorem \ref{the1}, $c t_{h+1}(C\!C\!N_{n})=(h+1)(r-1)-\frac{h(h+1)}{2}+1=(h+1)n-\frac{h(h+1)}{2}+1$ under the PMC model and MM$^{*}$ model.
\end{proof}

\begin{cor}
Let $n\geq 3$ and $1\leq h\leq n-2$, then the $(h+1)$-component diagnosability of $H\!C\!N_{n}$ is $c t_{h+1}(H\!C\!N_{n})=(h+1)n-\frac{h(h+1)}{2}+1$ under the PMC model and MM$^{*}$ model.
\end{cor}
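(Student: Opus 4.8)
The plan is to recognize that $H\!C\!N_{n}$ is a special case of the complete cubic network, so that Theorem~2 applies with no extra work. First I would collect the structural facts about $H\!C\!N_{n}$ already recorded in the excerpt: it consists of $2^{n}$ clusters, each isomorphic to $Q_{n}$; it is $(n+1)$-regular on $2^{2n}$ nodes; its cube edges are exactly the edges inside the clusters; and its cross edges form a perfect matching between clusters, with two cross edges joining $C_{c}$ and $C_{\overline{c}}$ and exactly one cross edge joining any other pair of clusters.

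Second, I would verify that this cross-edge structure fits Definition~\ref{def4}. Take the pairing $P=\bigl\{\{C_{c},C_{\overline{c}}\}: c\in\{0,1\}^{n}\bigr\}$, which is well defined because $c\neq\overline{c}$ for every bit string of length $n\geq 1$. Conditions~(2) and~(3) in the definition of $H\!C\!N_{n}$ show that every node $(c,p)$ has exactly one cross neighbour --- namely $(\overline{c},\overline{c})$ when $c=p$, and $(p,c)$ when $c\neq p$ --- that this assignment is an involution, and that a node and its cross neighbour always lie in distinct clusters. Hence the cross edges form a perfect matching $f$ on $V(H\!C\!N_{n})$ enjoying exactly the property demanded in Definition~\ref{def4} (two cross edges inside each pair of $P$, one cross edge otherwise). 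Consequently $H\!C\!N_{n}=C\!C\!N(n,f)$ for this $f$, so it is a complete cubic network.

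Finally, since $H\!C\!N_{n}$ is a complete cubic network, Theorem~2 gives directly that for $n\geq 3$ and $1\leq h\leq n-2$, $c t_{h+1}(H\!C\!N_{n})=(h+1)n-\frac{h(h+1)}{2}+1$ under the PMC model and MM$^{*}$ model. The only place that requires a little care is the check in the second step that $H\!C\!N_{n}$'s cross edges really satisfy the pairing constraint of Definition~\ref{def4}; once that is in hand, the corollary is an immediate specialization and there is no genuine obstacle. If a self-contained argument were preferred, one could instead verify conditions~(a) and~(b) of Theorem~\ref{the1} for $H\!C\!N_{n}$ exactly as was done for $C\!C\!N_{n}$ in the proof of Theorem~2: pick $v$ and $A=\{v_{1},\ldots,v_{h+1}\}$ inside one copy of $Q_{n-1}$ sitting in a cluster, use Lemma~\ref{lem30} for the intra-cluster common-neighbour counts together with the disjointness of cross edges for the rest, and invoke the $H\!C\!N_{n}$ counterpart of Lemma~\ref{lem4}.
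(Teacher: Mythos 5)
Your proposal matches the paper's (implicit) argument exactly: the corollary is stated right after Theorem 2 precisely because $H\!C\!N_{n}$ is a complete cubic network $C\!C\!N(n,f)$ — the cross edges form a perfect matching with two cross edges between $C_{c}$ and $C_{\overline{c}}$ and one between any other pair of clusters — so Theorem 2 applies verbatim. Your explicit verification of the pairing $P=\{\{C_{c},C_{\overline{c}}\}\}$ and of the matching via conditions (2) and (3) is correct and simply spells out what the paper leaves implicit.
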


\subsection{The generalized exchanged hypercube}
The exchanged hypercube is a link-diluted variation of the hypercube $Q_{s+t+1}$, proposed by Loh et al.~\cite{plwy05}. For a given positive integer $n$, let $I_{n}=\{1,2,\cdots,n\}$. The string $x_{n}x_{n-1}\cdots x_{1}$ is called a binary string of length $n$ if $x_{j}\in\{0,1\}$ for each $j\in I_{n}$. The definition of exchanged hypercubes are introduced as follows.

\begin{Def}\textnormal{~\cite{plwy05}}\label{def5}
Let $s,t\geq 1$, the exchanged hypercube $EH(s,t)$ consists of the node set $V(EH(s,t))$ and the edge set $E(EH(s,t))$, two nodes $u=u_{s+t}\cdots u_{t+1}u_{t}\cdots u_{1}u_{0}$ and $v=v_{s+t}\cdots v_{t+1}v_{t}\cdots$ $v_{1}v_{0}$ are linked by an edge, called $j$-dimensional edge, if and only if the following conditions are satisfied
\begin{itemize}
\item $u$ and $v$ differ exactly in one bit on the $j$-th bit or on the last bit,
\item if $j\in I_{t}$, then $u_{0}=v_{0}=1$,
\item $j\in I_{s+t}-I_{t}$, then $u_{0}=v_{0}=0$.
\end{itemize}
\end{Def}
Fig. 5(a) shows the regular exchanged hypercube $EH(1,1)$.

The generalized exchange hypercube, proposed by Cheng et al.~\cite{ekz17}. Let $s,t\geq 1$, the generalized exchanged hypercube $G\!E\!H(s,t,f)$ consists of the node set $V(G\!E\!H(s,t,f))$ and the edge set $E(G\!E\!H(s,t,f))$, where $|V(G\!E\!H(s,t,f))|=2^{s+t+1}$, $E(G\!E\!H(s,t,f))=E_{h}\cup E_{c}$, and $f$ is a one-to-one correspondence defined on $V(G\!E\!H(s,t,f))$.

$G\!E\!H(s,t,f)$ consists of two classes of hypercubes: one class contains $2^{t}$ $Q_{s}$'$s$, referred to as the Class-0 clusters; and the other contains $2^{s}$ $Q_{t}$'$s$, referred to as the Class-1 clusters. Class-0 and Class-1 clusters will be referred to as clusters of opposite class of each other, same class otherwise. The edges in the same cluster are referred to as cube edges, denoted by $E_{h}$, and the edges in different clusters are referred to as cross edges, denoted by $E_{c}$. There is exactly one cross edge between the clusters of opposite classes, and there is no edge between the clusters of same classes. Each node in Class-0 clusters has a unique neighbor in Class-1 clusters and vice versa. The function $f$ is a bijection between nodes of Class-0 clusters and those of Class-1 clusters, for two nodes $u, v$ in the same cluster, $f(u)$ and $f(v)$ are in two different clusters, the edge $(u, f(u))$ is a cross edge. The bijection $f$ ensures the existence of a perfect
matching between nodes of Class-0 clusters and those in the Class-1 clusters, but ignores the specifics of the perfect
matching. The generalized exchange hypercube $G\!E\!H(s,t)$ is triangle-free. Fig. 5(b) shows the irregular generalized exchanged hypercube $G\!E\!H(1,2)$.

In the special case when $s=t$, the exchanged hypercubes coincide with the so-called dual-cubes. The dual-cube $D_{n}$ was proposed by Li and Peng~\cite{ysp00}, which mitigates the problem of increasing number of links in the large-scale hypercube network. The dual-cube-like network $DC_{n}$~\cite{aacl13}, which is a generalization of dual-cubes, is isomorphic to $EH(n-1,n-1)$, a special case of $G\!E\!H(n-1,n-1)$ (see $DC_{3}$ in Fig. 6).

\begin{figure}[!ht]
 \begin{center}
 \resizebox*{5in}{!}{
 \includegraphics{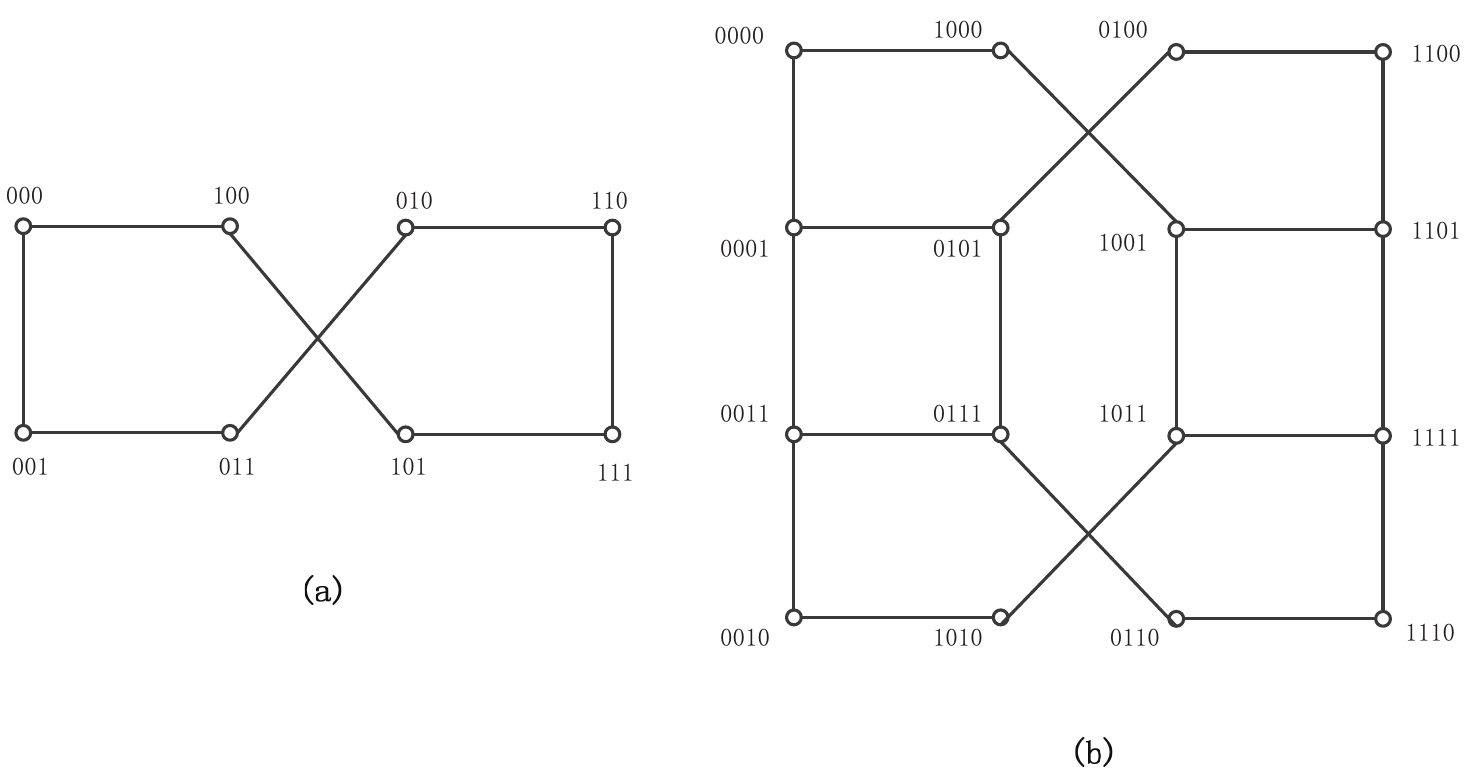}}
 \caption{(a)$EH(1,1)$; (b)$G\!E\!H(1,2)$.}
 \end{center}
\end{figure}

\begin{figure}[!ht]
 \begin{center}
 \resizebox*{5in}{!}{
 \includegraphics{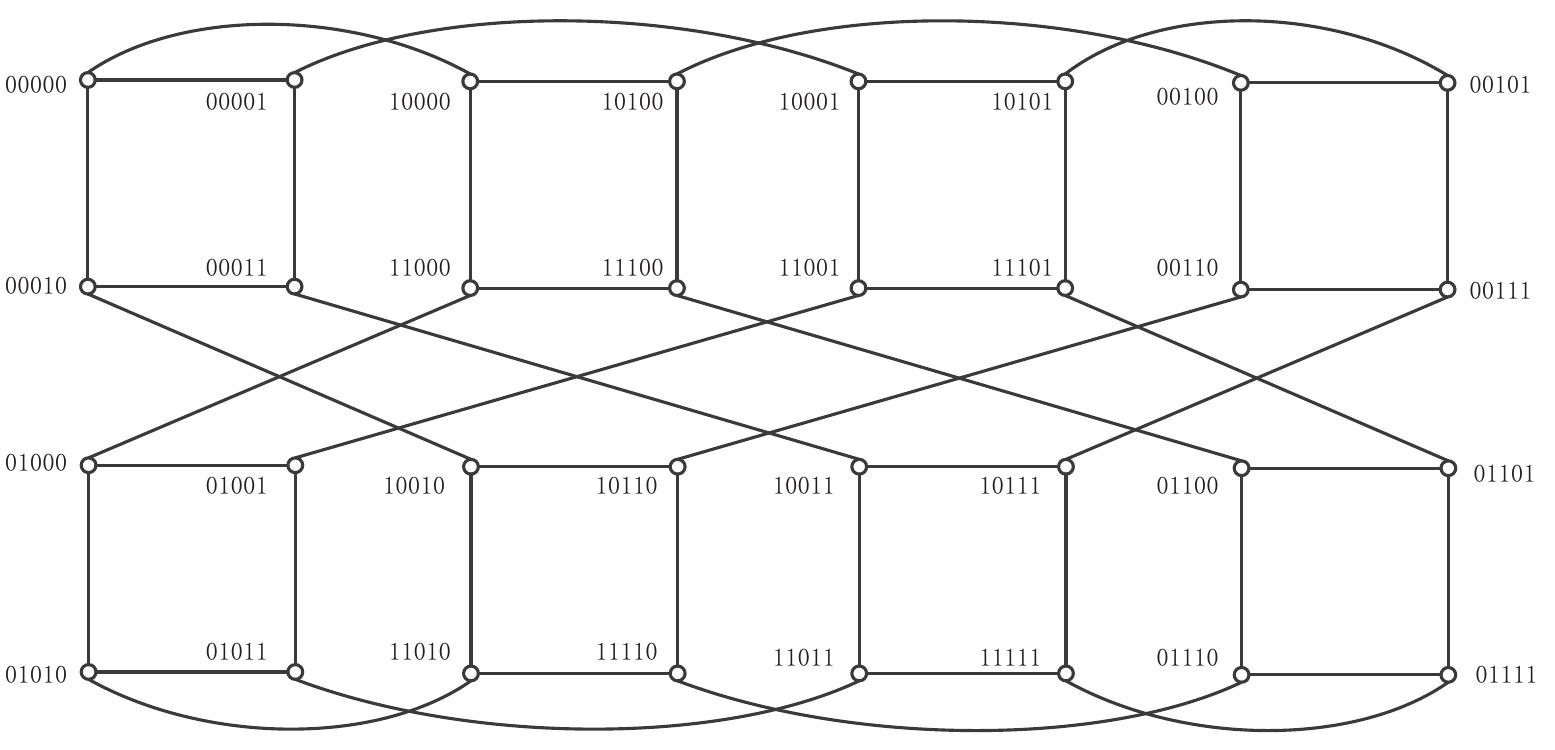}}
 \caption{The dual-cube-like network $DC_{3}$.}
 \end{center}
\end{figure}

\begin{lem}\textnormal{~\cite{eckz17}}\label{lem5}
For any integers $1\leq s\leq t$ and $1\leq h\leq s$, let $S\subseteq V(G\!E\!H(s,t))$ with $|S|\leq (s+1)h-\frac{(h-1)(h+2)}{2}-1$, then $G\!E\!H(s,t)-S$ is either connected or it has a component containing at least $2^{s+t+1}-|S|-(h-1)$ nodes.
\end{lem}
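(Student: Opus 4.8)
This structural lemma is quoted from \cite{eckz17}; the route I would take to establish it is as follows. Write $H := G\!E\!H(s,t)$, let $M$ be a component of $H-S$ of maximum order, and set $W := V(H)\setminus S\setminus V(M)$. If $H-S$ is connected there is nothing to prove, so assume $W\neq\emptyset$; it then suffices to show $|W|\le h-1$. A neighbour of $W$ lying in $M$ would attach a component of $H[W]$ to $M$, so $N_{H}(W)\subseteq S$ and $|N_{H}(W)|\le|S|$, and the lemma reduces to the isoperimetric estimate: \emph{if $W$ is a nonempty union of components of $H-S$ with $|W|\ge h$, then}
\[
|N_{H}(W)|\ \ge\ (s+1)h-\frac{(h-1)(h+2)}{2}\ =\ hs-\binom{h}{2}+1 .
\]
The configuration to keep in mind is a set $W_{0}$ of $h$ common neighbours of one vertex inside a single Class-$0$ cluster: it induces an independent set, has $h(s-1)-\binom{h}{2}+1$ cube-neighbours inside that cluster and $h$ further cross-neighbours lying in $h$ distinct Class-$1$ clusters, so $|N_{H}(W_{0})|=hs-\binom{h}{2}+1$ exactly. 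This both certifies sharpness and dictates the per-cluster accounting below: one must ``charge $s-1$ to the first vertex of $W$ in a cluster, $s-2$ to the second, and so on, plus one per cross edge.''

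I would prove the estimate through the cluster partition of $H$ into the $2^{t}$ Class-$0$ clusters ($\cong Q_{s}$) and the $2^{s}$ Class-$1$ clusters ($\cong Q_{t}$), using two structural facts: inside a cluster the only edges are cube edges, so $|N_{H}(W)|\ge\sum_{C}\partial_{C}(W\cap C)$, where $C$ runs over the clusters met by $W$ and $\partial_{C}$ denotes vertex boundary inside $C$; and every vertex carries exactly one cross edge, distinct cross edges are disjoint, and the cross edges of any cluster reach each cluster of the opposite class exactly once. Step~1 is the local bound: by Harper's vertex-isoperimetric inequality for the hypercube, $\partial_{Q_{m}}(X)\ge|X|(m-1)-\binom{|X|}{2}+1$, which (as $s\le t$) downgrades to $\partial_{C}(X)\ge|X|(s-1)-\binom{|X|}{2}+1$ for every cluster $C$. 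Step~2 adds the cross contribution: the number of cross edges with exactly one end in $W$ is $|W|-2e_{c}(W)$, where $e_{c}(W)$ counts the cross edges lying inside $W$, and their outer endpoints are pairwise distinct and outside $W$; in particular $e_{c}(W)=0$ when $W$ lies in one cluster. Combining, and using $\binom{a+b}{2}=\binom{a}{2}+\binom{b}{2}+ab$ (so that dispersing $W$ over more clusters only raises the summed cube bound), I would verify the estimate in the ``local'' regime, where $|W|$ is $O(s)$, by a short case analysis on the number and class of clusters met by $W$; in the complementary regime, where $|W|$ is large, hypercube isoperimetry forces each $|W\cap C|$ to stay well below $2^{t}$ (else $\partial_{C}(W\cap C)>|S|$ already), so $W$ meets many clusters and $\sum_{C}\partial_{C}(W\cap C)$ by itself exceeds $\frac{s^{2}+s+2}{2}\ge(s+1)h-\frac{(h-1)(h+2)}{2}$.

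The crux is the interaction hidden in Step~2: a cross-neighbour of $W$ can coincide with a cube-neighbour of $W$, so the cube and cross contributions cannot simply be summed. Such a coincidence occurs only when $W$ also meets the cluster at the far end of that cross edge; in that case the surplus $ab+1$ gained in the cube bound from that second cluster (with $a,b$ the two intersection sizes, via $\binom{a+b}{2}-\binom{a}{2}-\binom{b}{2}=ab$ together with the extra ``$+1$'' coming from the two per-cluster constants) more than pays for the lost cross-neighbours, so the inequality stays valid and is tight exactly on the single-cluster near-star $W_{0}$. Making this trade-off airtight across all configurations --- in particular confirming that mixed configurations spanning clusters of both classes are never more economical than $W_{0}$ --- is the delicate bookkeeping the argument turns on; the base case $h=1$ is just the connectivity $\kappa(H)=s+1$, the hypothesis $1\le h\le s$ is used both to guarantee that $W_{0}$ exists and to keep the charges $s-1,s-2,\dots$ non-negative, and the closing arithmetic comparing the assembled lower bound with $hs-\binom{h}{2}+1$ is routine.
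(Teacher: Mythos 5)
You are right that the paper offers no proof to compare against: Lemma~\ref{lem5} is imported verbatim from \cite{eckz17} and used as a black box in the proof of Theorem~3, so the only thing to assess is your sketch on its own terms. Its boundary data are correct: with $|S|\le (s+1)h-\frac{(h-1)(h+2)}{2}-1=hs-\binom{h}{2}$ the estimate you need is indeed $|N_{H}(W)|\ge hs-\binom{h}{2}+1$; your extremal set $W_{0}$ (the $h$ neighbours of a fixed vertex inside one $Q_{s}$-cluster) does have exactly $h(s-1)-\binom{h}{2}+1$ cube-neighbours plus $h$ pairwise distinct cross-neighbours, certifying sharpness; and the per-cluster bound $|N_{Q_{m}}(X)|\ge |X|(m-1)-\binom{|X|}{2}+1$ is the right local tool, although it is the elementary small-set common-neighbour bound (reliable only for $|X|$ up to about $m+1$), not Harper's theorem, and quoting it as an unconditional isoperimetric inequality hides the size restriction your argument later leans on.

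The genuine gaps are these. (i) The key inequality in the generality you state it --- every nonempty union $W$ of components of $H-S$ with $|W|\ge h$ satisfies $|N_{H}(W)|\ge hs-\binom{h}{2}+1$ --- is false: take $h\ge 2$, let $v$ be a Class-0 vertex and $S=N_{H}(v)$, so $|S|=s+1\le hs-\binom{h}{2}$, and let $W$ be the large component of $H-S$; then $N_{H}(W)\subseteq S$ has at most $s+1<hs-\binom{h}{2}+1$ elements. So the estimate can only hold with an upper bound on $|W|$, and in the dangerous scenario the lemma must exclude --- $H-S$ shattered into many small components --- your $W=V(H)\setminus S\setminus V(M)$ is nearly all of $V(H)\setminus S$, where boundary-versus-size reasoning gives nothing. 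The standard repair is to apply the estimate not to $W$ itself but to a sub-union $W'$ of small components greedily collected so that $h\le |W'|\le$ a fixed moderate cap; your plan never makes this move. (ii) The ``complementary regime'' is also unsound as stated: if $W$ swallows whole clusters, their internal boundaries $\partial_{C}(W\cap C)$ are $0$, so $\sum_{C}\partial_{C}(W\cap C)$ need not exceed $\frac{s^{2}+s+2}{2}$ (or anything), and the burden falls entirely on the cross edges --- i.e.\ on Step~2. (iii) That Step-2 trade-off (cross-neighbours of $W$ coinciding with cube-neighbours, to be paid for by the ``$ab+1$'' surplus) is the real content of the lemma for configurations spread over several clusters of both classes, and you assert rather than verify it. This is exactly why results of this type in the literature are proved by a cluster-by-cluster case analysis on how $S$ distributes over the $Q_{s}$'s and $Q_{t}$'s rather than by one global isoperimetric summation; as it stands your proposal is a credible plan with a correct extremal picture, but not yet a proof.
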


\begin{thm}
Let $3\leq s\leq t$ and $1\leq h\leq s-2$, then the $(h+1)$-component diagnosability of $G\!E\!H(s,t)$ is $c t_{h+1}(G\!E\!H(s,t))=(h+1)s-\frac{h(h+1)}{2}+1$ under the PMC model and MM$^{*}$ model.
\end{thm}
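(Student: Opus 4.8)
The plan is to apply Theorem \ref{the1} with $r=s+1$, exactly mirroring the treatment of the complete cubic network. First I would record that the ambient hypotheses of Theorem \ref{the1} hold: putting $r=s+1$, the assumptions $3\le s\le t$ and $1\le h\le s-2$ become $r\ge 4$ and $1\le h\le r-3$; the cross edges $(u,f(u))$ furnished by the bijection $f$ form a perfect matching of $G\!E\!H(s,t)$ (Class-0 and Class-1 each contain $2^{s+t}$ nodes and $f$ pairs them); and $|V(G\!E\!H(s,t))|=2^{s+t+1}\ge 2^{2s+1}>2^{2s}=2^{2r-2}$ since $t\ge s$. It then remains to verify conditions (a) and (b).

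For condition (a), I would work inside a single Class-0 cluster $C\cong Q_{s}$. Every node of $C$ has degree $s$ inside $C$ plus exactly one incident cross edge, hence degree $s+1=r$ in $G\!E\!H(s,t)$. Choose any $v\in V(C)$; since $h+1\le s-1<s=\deg_{C}(v)$, a set $A=\{v_{1},\dots,v_{h+1}\}\subseteq N_{C}(v)$ exists. Each $v_{i}$ is at Hamming distance $1$ from $v$, so any two $v_{i_{1}},v_{i_{2}}$ lie at distance $2$ in $Q_{s}$ and, by Lemma \ref{lem30}, have exactly two common neighbours in $C$, one of them being $v$; for $k\ge 3$ distinct indices a short bit-position argument shows $v$ is the \emph{unique} common neighbour of $v_{i_{1}},\dots,v_{i_{k}}$ in $C$, while $v$ and any $v_{i}$, being adjacent, have no common neighbour. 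The only neighbours outside $C$ are the single cross neighbours $f(v_{1}),\dots,f(v_{h+1})$ and $f(v)$, which are pairwise distinct (as $f$ is a bijection) and lie in Class-1 clusters, so they contribute nothing to any intersection $N_{G\!E\!H}(v_{i_{1}})\cap\cdots$. Hence $|N_{G\!E\!H}(v_{i_{1}})\cap N_{G\!E\!H}(v_{i_{2}})|=2$, $|N_{G\!E\!H}(v_{i_{1}})\cap\cdots\cap N_{G\!E\!H}(v_{i_{k}})|=1$ for $k\ge 3$, and $|N_{G\!E\!H}(v)\cap N_{G\!E\!H}(v_{i})|=0$, which is condition (a).

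Condition (b) is immediate from Lemma \ref{lem5}: with $r=s+1$ the bound $|S|\le hr-\frac{(h-1)(h+2)}{2}-1=(s+1)h-\frac{(h-1)(h+2)}{2}-1$ is precisely the hypothesis of Lemma \ref{lem5}, which applies since $1\le h\le s-2\le s$, and it yields that $G\!E\!H(s,t)-S$ is connected or has a component on at least $2^{s+t+1}-|S|-(h-1)$ nodes. Having checked both conditions, Theorem \ref{the1} gives $ct_{h+1}(G\!E\!H(s,t))=(h+1)(r-1)-\frac{h(h+1)}{2}+1=(h+1)s-\frac{h(h+1)}{2}+1$ under the PMC model and the MM$^{*}$ model.

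The main obstacle, as in the $C\!C\!N_{n}$ case, is the bookkeeping in condition (a): one must confirm that the cross edges — which here, unlike in $C\!C\!N_{n}$, join clusters of different sizes ($Q_{s}$ and $Q_{t}$) — never create an unexpected common neighbour, and that the local structure of $Q_{s}$ genuinely collapses every triple-or-higher neighbourhood intersection to the single vertex $v$. Everything else (the perfect matching, the vertex-count inequality, and condition (b)) follows by quoting the definition of $G\!E\!H(s,t)$ and Lemma \ref{lem5}.
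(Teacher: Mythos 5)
Your proposal is correct and follows essentially the same route as the paper: verify the hypotheses of Theorem \ref{the1} with $r=s+1$, using the perfect matching given by the cross edges, the bound $2^{s+t+1}>2^{2s}$, Lemma \ref{lem30} inside a Class-0 cluster $C\cong Q_s$ for condition (a), and Lemma \ref{lem5} for condition (b). Your treatment of the $k\ge 3$ intersections and of $|N(v)\cap N(v_i)|=0$ is in fact slightly more explicit than the paper's, but the argument is the same.
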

\begin{proof}
By Definition \ref{def5}, there exists a perfect matching between Class-0 clusters and Class-1 clusters in $G\!E\!H(s,t)$ and $|V(G\!E\!H(s,t))|=2^{s+t+1}$. $G\!E\!H(s,t)$ contains of $2^{s}+2^{t}$ clusters $C_{1},\ldots, C_{2^{t}},$
$C_{2^{t}+1},\ldots, C_{2^{s}+2^{t}}$ such that for $i\in [1,2^{t}]$, $C_{i}$, a Class-0 cluster, is isomorphic to $Q_{s}$, and for $i\in [2^{t}+1, 2^{s}+2^{t}]$, $C_{i}$, a Class-1 cluster, is isomorphic to $Q_{t}$.

Let $C_{1}=G(X_{1}, Y_{1}; M)$, a Class-0 cluster, that is isomorphic to $Q_{s}$ with $X_{1}\cong Q_{s-1}^{0}$ and $Y_{1}\cong Q_{s-1}^{1}$. Choosing an arbitrary node $v$ in $X_{1}$ and a node set $A=\{v_{1}, v_{2}, \ldots, v_{h}, v_{h+1}\} \subseteq N_{X_{1}}(v)$, we have that $deg_{G\!E\!H(s,t)}(x)=s+1$ for any node $x\in A \cup\{v\}$. For $3\leq s\leq t$, $|V(G\!E\!H(s,t))|=2^{s+t+1}> 2^{2s}$. Since $C_{1}\cong Q_{n}$, by Lemma \ref{lem30}, for any two nodes of $C_{1}$, if they have common neighbors at all, then they have exactly two common neighbors in $C_{1}$. We have $|N_{C_{1}}(v_{i_{1}})\cap N_{C_{1}}(v_{i_{2}})|=2$ $(1\leq i_{1}< i_{2}\leq h+1)$ and $|N_{C_{1}}(v_{i_{1}})\cap N_{C_{1}}(v_{i_{2}})\cap \cdots \cap N_{C_{1}}(v_{i_{k}})|=1$ $(1\leq i_{1}< i_{2}<\cdots < i_{k}\leq h+1$ and $k\geq 3)$. By the definition of $G\!E\!H(s,t)$, there is exactly one cross edge between the clusters of opposite classes, each node in Class-0 clusters has a unique neighbor in Class-1 clusters and vice versa, and cross edges have no common neighbor. Then $|N_{G\!E\!H(s,t)-C_{1}}(v_{i_{1}})\cap N_{G\!E\!H(s,t)-C_{1}}(v_{i_{2}})|=0$ $(1\leq i_{1}< i_{2}\leq h+1)$. Thus, $|N_{G\!E\!H(s,t)}(v_{i_{1}})\cap N_{G\!E\!H(s,t)}(v_{i_{2}})|=2$ $(1\leq i_{1}< i_{2}\leq h+1)$ and $|N_{G\!E\!H(s,t)}(v_{i_{1}})\cap N_{G\!E\!H(s,t)}(v_{i_{2}})\cap \cdots \cap N_{G\!E\!H(s,t)}(v_{i_{k}})|=1$ $(1\leq i_{1}< i_{2}<\cdots < i_{k}\leq h+1$ and $k\geq 3)$. Then $G\!E\!H(s,t)$ satisfies the condition (a) of Theorem \ref{the1}. Moreover, by Lemma \ref{lem5}, $G\!E\!H(s,t)$ satisfies the condition (b)  of Theorem \ref{the1}. Thus, by Theorem \ref{the1}, $c t_{h+1}(G\!E\!H(s,t))=(h+1)(r-1)-\frac{h(h+1)}{2}+1=(h+1)s-\frac{h(h+1)}{2}+1$ under the PMC model and MM$^{*}$ model.
\end{proof}

\begin{cor}
Let $n\geq 4$ and $1\leq h\leq n-3$, then the $(h+1)$-component diagnosability of $DC_{n}$ is $c t_{h+1}(DC_{n})=(h+1)(n-1)-\frac{h(h+1)}{2}+1$ under the PMC model and MM$^{*}$ model.
\end{cor}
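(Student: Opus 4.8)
The plan is to derive the corollary directly from the preceding theorem on the generalized exchanged hypercube, exploiting the isomorphism $DC_{n}\cong EH(n-1,n-1)$, which in turn is the special case $G\!E\!H(n-1,n-1)$ with $s=t=n-1$. First I would substitute $s=t=n-1$ into the hypotheses of the theorem: the constraint $3\le s\le t$ becomes $n-1\ge 3$, i.e. $n\ge 4$, and the constraint $1\le h\le s-2$ becomes $1\le h\le (n-1)-2=n-3$. These are precisely the ranges stated in the corollary, so the hypotheses match up cleanly with no gap.

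Next I would invoke the theorem's conclusion, $ct_{h+1}(G\!E\!H(s,t))=(h+1)s-\frac{h(h+1)}{2}+1$, and again specialize $s=n-1$ to get $ct_{h+1}(G\!E\!H(n-1,n-1))=(h+1)(n-1)-\frac{h(h+1)}{2}+1$. Since $DC_{n}$ is isomorphic to $EH(n-1,n-1)$, which is a special case of $G\!E\!H(n-1,n-1)$ (as noted in the paragraph preceding Lemma \ref{lem5}), and the component diagnosability is an isomorphism invariant, we conclude $ct_{h+1}(DC_{n})=(h+1)(n-1)-\frac{h(h+1)}{2}+1$ under both the PMC model and the MM$^{*}$ model.

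The only subtlety — and the step I would be most careful about — is the phrase ``a special case of $G\!E\!H(n-1,n-1)$''. Strictly, $EH(s,t)$ fixes a particular bijection $f$ (the one induced by complementing the last bit), whereas $G\!E\!H(s,t)$ allows any bijection $f$ respecting the Class-0/Class-1 structure; so $EH(n-1,n-1)$ is one admissible choice of $G\!E\!H(n-1,n-1)$. Since Theorem 3 is proved for \emph{every} $G\!E\!H(s,t)$ satisfying conditions (a) and (b) of Theorem \ref{the1}, and the proof only used structural properties shared by all such bijections (a perfect matching between the two classes, exactly one cross edge between opposite-class clusters, no common neighbors among cross edges), the conclusion applies verbatim to $EH(n-1,n-1)=DC_{n}$. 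Hence no extra verification of conditions (a)/(b) is needed — it is already subsumed by the proof of Theorem 3. This makes the corollary essentially a one-line instantiation, and I expect no genuine obstacle beyond stating the substitution carefully.
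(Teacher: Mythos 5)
Your proposal is correct and is exactly the derivation the paper intends: the corollary is stated immediately after the theorem on $G\!E\!H(s,t)$ and follows by setting $s=t=n-1$ (so $n\geq 4$ and $1\leq h\leq n-3$) and using the noted isomorphism $DC_{n}\cong EH(n-1,n-1)$, a special case of $G\!E\!H(n-1,n-1)$. Your extra remark on why no separate verification of conditions (a) and (b) is needed is a sensible clarification but does not change the route.
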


\subsection{The hierarchical hypercube}
The hierarchical hypercube was proposed by Malluhi and Bayoumi~\cite{qmma94}, which is a modification of an $n$-dimensional cube-connected-cycle $C\!C\!C_{n}$~\cite{fpjv81}, and the cycle is replaced with a hypercube. The definition and some available properties of hierarchical hypercubes are introduced as follows.

\begin{Def}\textnormal{~\cite{qmma94}}\label{def6}
The $n$-dimensional hierarchical hypercube, $H\!H\!C_{n}$, is defined to be a graph with node set $\{(X,Y)|X=a_{n-1}a_{n-2}\ldots a_{m}, Y=a_{m-1}a_{m-2}\ldots a_{0}$, and $a_{i}\in\{0,1\}$ for all $0\leq i\leq n-1\}$, where $n=2^{m}+m$ and $m\geq 1$. Node adjacency of $H\!H\!C_{n}$ is defined as follows: $(A,B)$ is adjacent to

\textnormal{(1)}$(A,B^{l})$ for all $0\leq l\leq m-1$, and

\textnormal{(2)}$(A^{m+dec(B)},B)$, where $dec(B)$ is the decimal value of $B$.
\end{Def}

An $n$-dimensional hierarchical hypercube $H\!H\!C_{n}$ is a $(m +1)$-regular bipartite graph of $2^{n}$ nodes, where $n =2^{m}+m.$ Clearly, $H\!H\!C_{n}$ is triangle-free and consists of $2^{2^{m}}$ clusters, and each cluster is isomorphic to the $m$-dimensional hypercube $Q_{m}.$ The cross edges between the clusters have the property that every node is incident to exactly one of them (i.e. a perfect matching), any two clusters of different classes have exactly one cross edge between them. An example is shown in Fig. 7 (where $m=2$ and $n=6$).

\begin{figure}[!ht]
 \begin{center}
 \resizebox*{5in}{!}{
 \includegraphics{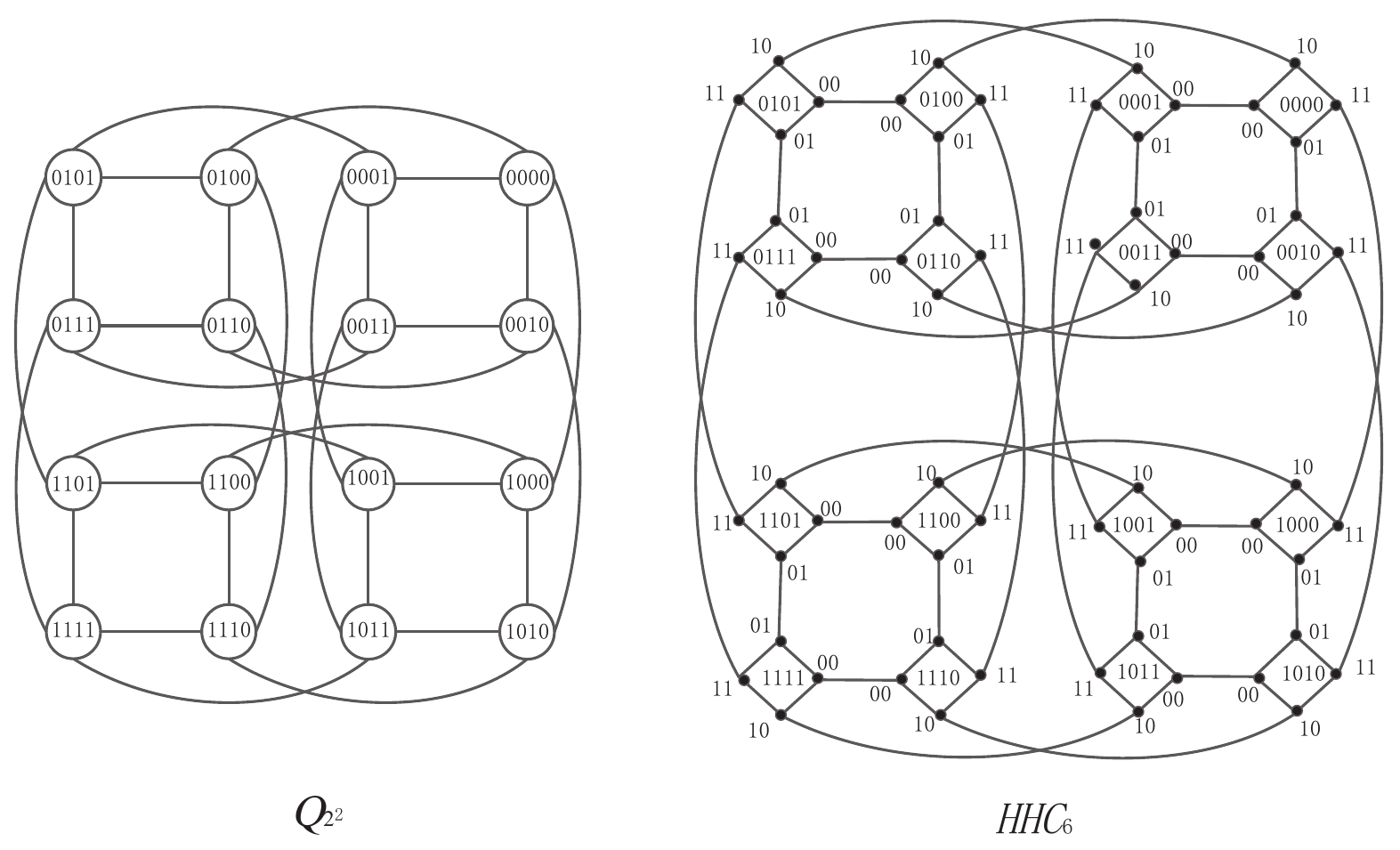}}
 \caption{The illustration of $H\!H\!C_{6}$.}
 \end{center}
\end{figure}

\begin{lem}\textnormal{~\cite{lxyr18}}\label{lem6}
For any integers $m\geq 5$, $n=2^{m}+m$, and $1\leq h\leq m$, let $S\subseteq V(H\!H\!C_{n})$ with $|S|\leq (m+1)h-\frac{(h-1)(h+2)}{2}-1$. If $H\!H\!C_{n}-S$ is disconnected, then $H\!H\!C_{n}-S$ has a large connected component containing at least $2^{n}-|S|-(h-1)$ nodes.
\end{lem}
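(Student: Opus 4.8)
The plan is to recast the statement in purely structural form and then reduce it to a vertex--isoperimetric inequality on $H\!H\!C_{n}$. Fix $S$ with $|S|\le (m+1)h-\frac{(h-1)(h+2)}{2}-1$, suppose $H\!H\!C_{n}-S$ is disconnected, let $C$ be a component of $H\!H\!C_{n}-S$ of largest order, and put $W=V(H\!H\!C_{n})\setminus(S\cup C)$. It suffices to prove $|W|\le h-1$, since then $C$ contains at least $2^{n}-|S|-|W|\ge 2^{n}-|S|-(h-1)$ nodes. Because no edge joins $W$ to $C$, every neighbour of $W$ outside $W$ lies in $S$, i.e.\ $N_{H\!H\!C_{n}}(W)\setminus W\subseteq S$. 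Using the identity $\frac{(h-1)(h+2)}{2}=\binom{h+1}{2}-1$, the hypothesis reads $|S|\le(m+1)h-\binom{h+1}{2}$, so it is enough to show that any vertex set $T$ cut off from a nonempty remainder of $H\!H\!C_{n}$ satisfies
\[
|N_{H\!H\!C_{n}}(T)\setminus T|\ \ge\ (m+1)\,t-\binom{t+1}{2}+1,\qquad t:=\min\{|T|,m\},
\]
because then $|W|\ge h$ forces $|S|\ge (m+1)h-\binom{h+1}{2}+1$: the right-hand side of the display is increasing in $|T|$ on $\{1,\dots,m\}$ (consecutive differences $(m+1)-t\ge 1$) and, once $|T|\ge m$, is evaluated at $t=m$ and hence still $\ge (m+1)h-\binom{h+1}{2}+1$ since $h\le m$. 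This contradicts the bound on $|S|$.

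I would establish the displayed inequality from the recursive description of $H\!H\!C_{n}$ as $2^{2^{m}}$ clusters, each isomorphic to $Q_{m}$, glued by a perfect matching of cross edges in which any two adjacent clusters share exactly one cross edge. Projecting $T$ onto the clusters, the part of $N(T)\setminus T$ lying inside the clusters is controlled by Harper's vertex--isoperimetric inequality on $Q_{m}$ (equivalently, by the known component-connectivity estimates for $Q_{m}$, applied cluster by cluster), while the cross edges contribute further, pairwise distinct, boundary vertices. A compression/exchange argument then shows that among all cut-off sets of a fixed size $t\le m$ the vertex boundary is minimised by a ``claw'' $\{v\}\cup\{v_{1},\dots,v_{t-1}\}$ whose $v_{i}$ are neighbours of $v$ inside a single cluster, for which the boundary equals exactly $(m+1)t-\binom{t+1}{2}+1$; this proves the inequality for $|T|\le m$ and, since a whole cluster already has vertex boundary $2^{m}\ge\binom{m+1}{2}+1$, also anchors the regime $|T|>m$. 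An alternative that sidesteps a global isoperimetric inequality is induction on $h$: the base case $h=1$ is the $(m+1)$-connectivity of $H\!H\!C_{n}$; in the inductive step, if $|S|$ already meets the $(h-1)$-threshold we apply the induction hypothesis, and otherwise $|S|$ lies in a shell of width $(m+1)-h$ above it, so deleting the closed neighbourhood of a vertex of $W$ adjacent to $C$ drops $|S|$ below the $(h-1)$-threshold while removing at most one vertex from the small side, and the induction hypothesis finishes the argument.

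The genuinely delicate point, and the one I expect to be the main obstacle, is the case where the small side $W$ is \emph{large} (more than $m$ vertices), since then the claw estimate does not apply verbatim and $H\!H\!C_{n}[W]$ may be comparatively dense. Here one must exploit that a large $W$ meets many clusters (a single $Q_{m}$ holds only $2^{m}$ vertices and expands well internally), that each such cluster forces into $S$ either many deleted cross-edge endpoints or many deleted cube vertices, and that this pushes $|S|$ well past $(m+1)h-\binom{h+1}{2}$ for every $h\le m$; the hypothesis $m\ge 5$ (equivalently $n=2^{m}+m$ large, so that the cluster sizes $2^{m}$ dominate the quadratic threshold) is exactly what makes this go through. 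Turning this heuristic into a clean count --- or, in the inductive route, verifying that the peeling step never needs $|W|>m$ because each step strictly decreases $h$ --- is where the real work lies.
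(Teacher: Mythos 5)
This lemma is not proved in the paper at all: it is quoted as a known result from Li et al.\ \cite{lxyr18}, so there is no in-paper argument to measure you against. Judged on its own, your proposal has the right skeleton: letting $W$ be the union of the small components, observing $N_{H\!H\!C_{n}}(W)\setminus W\subseteq S$, and aiming for $|W|\le h-1$ via a lower bound on the vertex boundary of any cut-off set is the standard reduction, and the threshold $(m+1)t-\binom{t+1}{2}+1$ with the ``claw'' as extremal configuration is indeed the right quantity (it matches the indistinguishable pair built in Theorem \ref{the1}). The problem is that this boundary inequality \emph{is} the lemma, and you do not prove it. Harper's theorem gives isoperimetry inside a single $Q_{m}$ cluster, but you give no argument for how a cut-off set spread over several clusters interacts with the cross-edge matching (the cross neighbours of vertices in different clusters can collide with cube boundaries of other clusters, so ``pairwise distinct boundary vertices'' needs proof); the compression/exchange step that would identify the claw as the minimiser over all cut-off sets of size $t\le m$ is only asserted; and the regime $|T|>m$ is ``anchored'' by the single example of a whole cluster, which bounds nothing about arbitrary large cut-off sets. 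You explicitly concede that handling large $W$ ``is where the real work lies,'' so what you have is a plan, not a proof.

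The fallback induction on $h$ also does not work as described. If $|S|$ exceeds the $(h-1)$-threshold, deleting the closed neighbourhood $N[w]$ of a vertex $w\in W$ adjacent to $C$ does not place you back inside the induction hypothesis: the relevant deleted set becomes $S\cup N[w]$ (or $S\setminus N(w)$ together with a graph that is no longer $H\!H\!C_{n}$ minus a set of the hypothesised size), and its cardinality need not drop below the $(h-1)$-threshold; moreover the claim that this step ``removes at most one vertex from the small side'' is unjustified, since $N[w]$ can meet $W$ in many vertices and can also disconnect new pieces from $C$. To complete the argument you would either have to carry out the cluster-by-cluster count honestly (splitting on how many clusters $W$ meets and how much of $S$ each such cluster absorbs, using the known analogous result inside $Q_{m}$), or follow the route of the cited source and derive the lemma from established extra-connectivity/component-connectivity bounds for $H\!H\!C_{n}$ rather than from a bespoke isoperimetric inequality.
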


\begin{thm}
Let $m\geq 5,$ $n=2^{m}+m$ and $1\leq h\leq m-2$, then the $(h+1)$-component diagnosability of $H\!H\!C_{n}$ is $c t_{h+1}(H\!H\!C_{n})=(h+1)m-\frac{h(h+1)}{2}+1$ under the PMC model and MM$^{*}$ model.
\end{thm}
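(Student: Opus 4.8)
The plan is to invoke Theorem~\ref{the1} with $r=m+1$, exactly as in the proofs for $C\!C\!N_{n}$ and $G\!E\!H(s,t)$. Four things must be checked: that $H\!H\!C_{n}$ has a perfect matching, that $|V(H\!H\!C_{n})|\geq 2^{2r-2}$, and that conditions (a) and (b) of Theorem~\ref{the1} hold. The perfect matching is immediate: the cross edges between clusters of $H\!H\!C_{n}$ have the property that every node is incident to exactly one of them, so they form a perfect matching. For the size bound, $|V(H\!H\!C_{n})|=2^{n}=2^{2^{m}+m}$ and $2r-2=2m$, so we need $2^{m}+m\geq 2m$, i.e.\ $2^{m}\geq m$, which holds for all $m\geq 1$ and in particular for $m\geq 5$. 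Condition (b) with $r-1=m$ is precisely Lemma~\ref{lem6}: for $|S|\leq (m+1)h-\frac{(h-1)(h+2)}{2}-1=hr-\frac{(h-1)(h+2)}{2}-1$, $H\!H\!C_{n}-S$ is connected or has a component with at least $2^{n}-|S|-(h-1)$ nodes.

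The bulk of the work is condition (a). Recall $H\!H\!C_{n}$ is $(m+1)$-regular, triangle-free, and consists of $2^{2^{m}}$ clusters each isomorphic to $Q_{m}$, with exactly one cross edge between any two clusters of opposite class and none between clusters of the same class. Pick one cluster $C$ and write $C=G(X_{1},Y_{1};M)$ with $X_{1}\cong Q_{m-1}^{0}$ and $Y_{1}\cong Q_{m-1}^{1}$. Choose a node $v\in X_{1}$ and a set $A=\{v_{1},\dots,v_{h},v_{h+1}\}\subseteq N_{X_{1}}(v)$; this is possible since $deg_{Q_{m-1}}(v)=m-1\geq h+1$ by the hypothesis $h\leq m-2$. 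Every node of $A\cup\{v\}$ has degree $m+1=r$ in $H\!H\!C_{n}$. For the intersection conditions, I would argue that all common neighbors of nodes of $A$ lie inside $X_{1}$: a node of $X_{1}$ has exactly one neighbor in $Y_{1}$ (its $M$-partner) and exactly one neighbor outside $C$ (its cross edge), and these are distinct for distinct nodes of $A$, so no two nodes of $A$ share a neighbor outside $X_{1}$. Inside $X_{1}\cong Q_{m-1}$, Lemma~\ref{lem30} gives that any two of the $v_{i}$ have either $0$ or exactly $2$ common neighbors; since $v$ is a common neighbor they have exactly $2$, establishing $|N_{G}(v_{i_1})\cap N_{G}(v_{i_2})|=2$. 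For $k\geq 3$, the $v_{i}$'s are $v\oplus e_{j_{1}},\dots,v\oplus e_{j_{k}}$ for distinct coordinates $j_{1},\dots,j_{k}$ of $Q_{m-1}$, and any common neighbor must agree with all of them up to one coordinate flip, forcing it to be $v$ itself; hence $|\bigcap_{l=1}^{k}N_{G}(v_{i_l})|=1$. Finally $|N_{G}(v)\cap N_{G}(v_{i})|=0$ because $H\!H\!C_{n}$ is triangle-free and $v$ is adjacent to $v_{i}$.

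Having verified all hypotheses, Theorem~\ref{the1} yields $ct_{h+1}(H\!H\!C_{n})=(h+1)(r-1)-\frac{h(h+1)}{2}+1=(h+1)m-\frac{h(h+1)}{2}+1$ under the PMC and MM$^{*}$ models. The one step needing genuine care is the neighbourhood-intersection bookkeeping in condition (a): one must be sure that the single cross edge and the single $M$-edge at each node of $A$ never produce an unwanted extra common neighbour, and that the count of common neighbours \emph{inside} the hypercube is exactly the value Lemma~\ref{lem30} provides rather than an overcount; everything else is a routine substitution of $r=m+1$ into the already-proved Theorem~\ref{the1} and Lemma~\ref{lem6}.
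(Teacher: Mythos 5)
Your proposal is correct and follows essentially the same route as the paper: apply Theorem~\ref{the1} with $r=m+1$, verifying condition (a) inside a cluster $C\cong Q_{m}$ via Lemma~\ref{lem30} together with the fact that distinct nodes have distinct cross-edge (and matching) neighbours, and condition (b) via Lemma~\ref{lem6}. Your treatment is in fact a bit more explicit than the paper's (the coordinate argument for the $k\geq 3$ intersections and the triangle-free justification of $|N_{G}(v)\cap N_{G}(v_{i})|=0$), but it is the same proof in substance.
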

\begin{proof}
By the definition of $H\!H\!C_{n}$, there exists a perfect matching between clusters in $H\!H\!C_{n}$ and $|V(H\!H\!C_{n})|=2^{n}$. $H\!H\!C_{n}$ consists of $2^{2^{m}}$ clusters, denoted as $C_{1}, C_{2}, \ldots, C_{2^{2^{m}}}$, and each cluster is isomorphic to $Q_{m}$.  Since $C_{1}\cong Q_{m}$, let $C_{1}=G(X_{1}, Y_{1}; M)$ with $X_{1}\cong Q_{m-1}^{0}$ and $Y_{1}\cong Q_{m-1}^{1}$. Choosing an arbitrary node $v$ in $X_{1}$ and a node set $A=\{v_{1}, v_{2}, \ldots, v_{h}, v_{h+1}\} \subseteq N_{X_{1}}(v)$, we can obtain that $deg_{H\!H\!C_{n}}(x)=m+1$ for any node $x\in A \cup\{v\}$. For $m\geq 5$, $|V(H\!H\!C_{n})|=2^{n}> 2^{2m}$. Since $C_{1}\cong Q_{m}$, by Lemma \ref{lem30}, for any two nodes of $C_{1}$, if they have common neighbors at all, then they have exactly two common neighbors in $C_{1}$. We have $|N_{C_{1}}(v_{i_{1}})\cap N_{C_{1}}(v_{i_{2}})|=2$ $(1\leq i_{1}< i_{2}\leq h+1)$ and $|N_{C_{1}}(v_{i_{1}})\cap N_{C_{1}}(v_{i_{2}})\cap \cdots \cap N_{C_{1}}(v_{i_{k}})|=1$ $(1\leq i_{1}< i_{2}<\cdots < i_{k}\leq h+1$ and $k\geq 3)$. By the definition of $H\!H\!C_{n}$, the external neighbours of any pair vertices are in the different clusters. Then $|N_{H\!H\!C_{n}-C_{1}}(v_{i_{1}})\cap N_{H\!H\!C_{n}-C_{1}}(v_{i_{2}})|=0$ $(1\leq i_{1}< i_{2}\leq h+1)$. Thus, $|N_{H\!H\!C_{n}}(v_{i_{1}})\cap N_{H\!H\!C_{n}}(v_{i_{2}})|=2$ $(1\leq i_{1}< i_{2}\leq h+1)$ and $|N_{H\!H\!C_{n}}(v_{i_{1}})\cap N_{H\!H\!C_{n}}(v_{i_{2}})\cap \cdots \cap N_{H\!H\!C_{n}}(v_{i_{k}})|=1$ $(1\leq i_{1}< i_{2}<\cdots < i_{k}\leq h+1$ and $k\geq 3)$. Then $H\!H\!C_{n}$ satisfies the condition (a) of Theorem \ref{the1}. Moreover, by Lemma \ref{lem6}, the structure of $H\!H\!C_{n}$ satisfies the condition (b) of Theorem \ref{the1}. Thus, by Theorem \ref{the1}, $c t_{h+1}(H\!H\!C_{n})=(h+1)(r-1)-\frac{h(h+1)}{2}+1=(h+1)m-\frac{h(h+1)}{2}+1$ under the PMC model and MM$^{*}$ model.
\end{proof}

\subsection{Cayley graphs generated by transposition trees \textnormal{(}except star graphs\textnormal{)}}
There are some studies on Cayley graphs generated by transposition trees~\cite{elns07, whj10}. The definition and some available properties of Cayley graphs generated by transposition trees are presented as follows.

\begin{Def}\textnormal{~\cite{whj10}}\label{def7}
Let $\Gamma$ be a finite group, and let $\Delta$ be a subset of $\Gamma$ such that $\Delta$ has no identity element. The directed Cayley graph $\Gamma(\Delta)$ is defined as follows: node set $\Gamma$ and arc set $\{(u, uv)|u\in \Gamma, v\in \Delta\}$. If for each $v\in \Delta$, we also have its inverse $u^{-1}\in \Delta$, then we say that this Cayley graph is an undirected Cayley graph, and every undirected Cayley graph is node-transitive. Denote by $Sym(n)$ the group of all permutations on $\{1, 2, \ldots, n\}$. Let $(l_{1}l_{2}\cdots l_{n})$ be a permutation, which is called a transposition, denotes the permutation that swaps the objects at position $i$ and $j$, that is $(l_{1}\cdots l_{i}\cdots l_{j}\cdots l_{n})(ij)=(l_{1}\cdots l_{j}\cdots l_{i}\cdots l_{n})$. Let $H$ be a set of transpositions, we call $G(H)$ the transposition generating graph, where the node set of $G(H)$ is the set of all $n!$ permutations on $\{1, 2, \ldots, n\}$ and edge set $E(G(H))=\{(ij)|(i,j)\in H\}$.
\end{Def}

\begin{figure}[!ht]
 \begin{center}
 \resizebox*{5in}{!}{
 \includegraphics{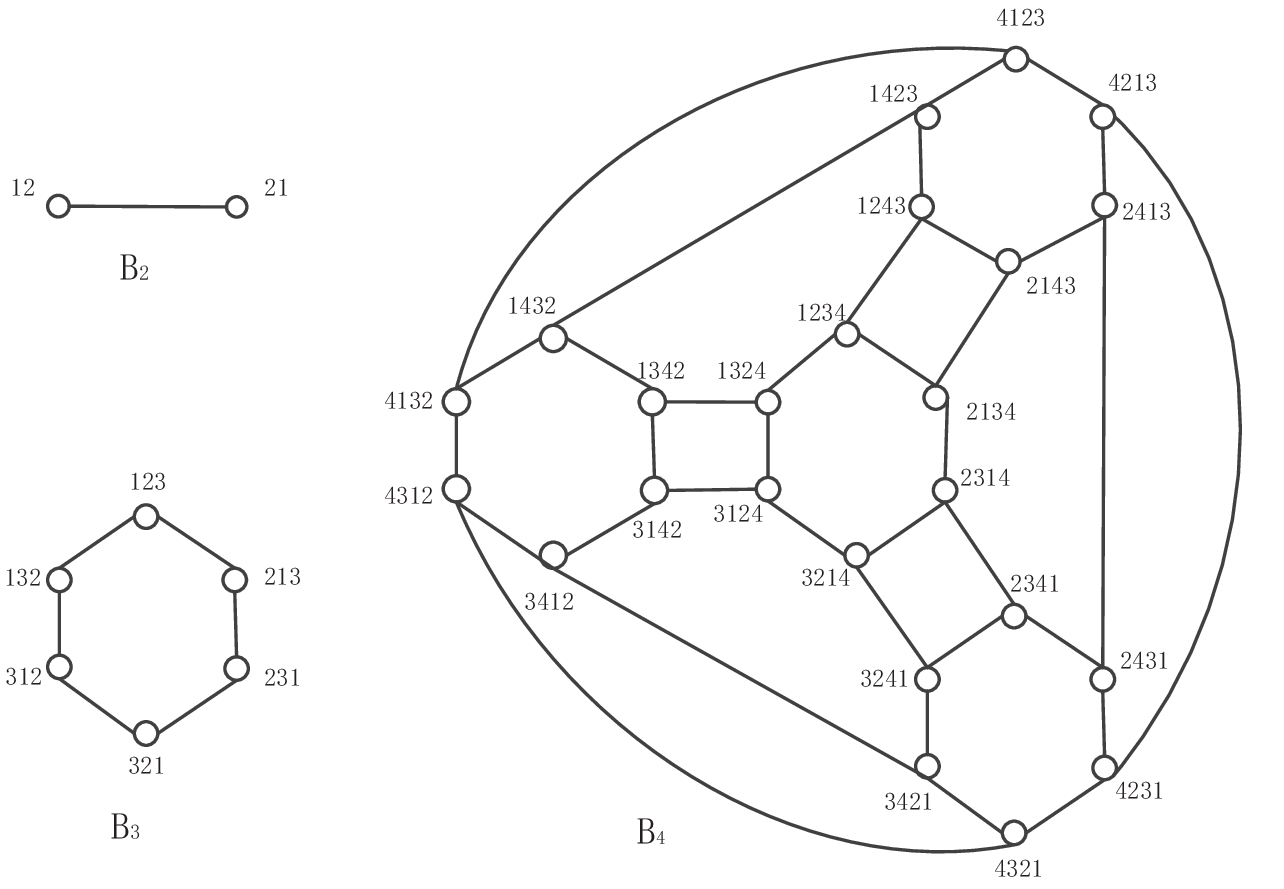}}
 \caption{The illustration of $B_{2}$, $B_{3}$ and $B_{4}$.}
 \end{center}
\end{figure}

When $G(H)$ is a tree, we call the corresponding transposition generating graph a transposition tree. If the
transposition tree is a path, then Cayley graphs are also called bubble-sort graphs. Let $\Gamma_{n}(H)$ be a Cayley graph generated by a transposition tree $G(H)$, where $G(H)$ is not isomorphic to $K_{1,n-1}$.  It is easy to see from the definition that $\Gamma_{n}(H)$ is triangle-free and  is an $(n-1)$-regular graph with $n!$ nodes. Fig. 8 shows bubble-sort graphs $B_{2}$, $B_{3}$ and $B_{4}$.

\begin{lem}\textnormal{~\cite{elns07}}\label{lem7}
For any integers $n\geq 4$ and $1\leq h\leq 3$, let $S\subseteq V(\Gamma_{n}(H))$ with $|S|\leq (n-1)h-\frac{(h-1)(h+2)}{2}-1$. If $\Gamma_{n}(H)-S$ is disconnected, then $\Gamma_{n}(H)-S$ consists of one large component, and remaining small components with at most $h-1$ nodes in total.
\end{lem}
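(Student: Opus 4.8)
The plan is to reduce the lemma to a single neighbourhood-expansion (isoperimetric) estimate for $\Gamma_{n}(H)$ and then extract the conclusion by a short monotonicity argument. First I would record the two cosmetic simplifications that drive everything: the deletion bound rewrites as $|S|\le (n-1)h-\binom{h+1}{2}$, and the desired conclusion is equivalent to saying that, among the components of $\Gamma_{n}(H)-S$, all but a single largest one $C_{1}$ contain at most $h-1$ nodes in total. Writing $T=V(\Gamma_{n}(H))\setminus\big(S\cup V(C_{1})\big)$ for this ``small part'', the key observation is that no vertex of $T$ can be adjacent to $C_{1}$, so $N_{\Gamma_{n}(H)}(T)\subseteq S$ and hence $|N(T)|\le|S|$. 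Thus it suffices to prove a lower bound on $|N(T)|$ that is violated as soon as $|T|\ge h$.

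The central estimate I would establish is that for every $W\subseteq V(\Gamma_{n}(H))$ with $1\le|W|\le n-2$,
\[
|N(W)|\ \ge\ (n-1)|W|-\binom{|W|+1}{2}+1 .
\]
The bound is tight: a single vertex gives $n-1$, an edge gives $2(n-1)-2=2n-4$, and three consecutive vertices of a $4$-cycle give $3n-8$, each meeting it with equality, so the extremal sets are the ``nested'' configurations built from $4$-cycles (these exist because $\Gamma_{n}(H)$ has girth $4$ once $G(H)\ne K_{1,n-1}$). To prove the inequality in general I would add the vertices of $W$ greedily, at each step choosing a vertex adjacent to the current set whenever one exists. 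Since $\Gamma_{n}(H)$ is $(n-1)$-regular and triangle-free, and since any two of its vertices have at most two common neighbours (the analogue of Lemma \ref{lem30}, which I would verify directly from the transposition generators), the number of incidences ``wasted'' when the $k$-th vertex is inserted — the vertex absorbing itself, its edges back into the current set, and its neighbours that were already counted — is at most $k$. Hence the $k$-th insertion raises $|N(\cdot)|$ by at least $(n-1)-k$, and summing $(n-1)+\sum_{k=2}^{|W|}\big((n-1)-k\big)$ yields exactly $(n-1)|W|-\binom{|W|+1}{2}+1$.

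With the estimate in hand the deduction is immediate for $|T|$ in the admissible range. Put $\phi(m)=(n-1)m-\binom{m+1}{2}+1$; then $\phi(h)=(n-1)h-\frac{h(h+1)}{2}+1$ equals one more than the largest admissible value of $|S|$, and $\phi$ is strictly increasing on $\{1,\dots,n-2\}$ because $\phi(m+1)-\phi(m)=(n-1)-(m+1)>0$ for $m\le n-3$. Consequently, if $h\le|T|\le n-2$ we would obtain $|N(T)|\ge\phi(|T|)\ge\phi(h)>|S|\ge|N(T)|$, a contradiction; therefore $|T|\le h-1$, which is exactly the claim and in particular forces the single large component $C_{1}$ to exist.

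The hard part will be ruling out a ``medium-sized'' small part, i.e. the possibility that $|T|\ge n-1$, where the isoperimetric bound no longer applies and $\phi$ is no longer increasing. For this I would use the recursive structure of $\Gamma_{n}(H)$: choosing a leaf $\ell$ of the transposition tree $G(H)$ and partitioning the vertices by the symbol occupying position $\ell$ decomposes $\Gamma_{n}(H)$ into $n$ vertex-disjoint copies of $\Gamma_{n-1}(H-\ell)$ joined by a perfect matching. An induction on $n$, with the known fact $\kappa(\Gamma_{n}(H))=n-1$ as the base case, shows that a cut with $|S|\le(n-1)h-\binom{h+1}{2}\le\frac{(n-1)(n-2)}{2}$ is far too small to split off an intermediate-sized set: in at least one copy the trace of $S$ lies below the $(n-2)$-connectivity threshold, so that copy remains connected and, through the matching edges, absorbs all but a boundedly small remainder into $C_{1}$. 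Making this absorption argument quantitative across the $n$ copies — and reconciling the inductive accounting with the exact constant $\binom{h+1}{2}$ — is the technically heavy step; everything else reduces to the clean greedy isoperimetric computation above.
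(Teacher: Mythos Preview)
The paper does not prove this lemma; it is quoted from Cheng and Lipt\'ak~\cite{elns07} and invoked solely to verify condition~(b) of Theorem~\ref{the1} for $\Gamma_n(H)$. There is therefore no proof in the paper against which to compare your attempt.

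As for the sketch itself, the isoperimetric target $|N(W)|\ge (n-1)|W|-\binom{|W|+1}{2}+1$ for $|W|\le n-2$ is the right inequality and is sharp on the configurations you name, but the greedy argument you give for it is incomplete. The assertion that step~$k$ wastes at most $k$ incidences amounts to $|N(v_k)\cap W_{k-1}|+|N(v_k)\cap N(W_{k-1})|\le k-1$, whereas the ``at most two common neighbours'' property by itself yields only $|N(v_k)\cap N(W_{k-1})|\le 2b$, with $b$ the number of earlier vertices at distance~$2$ from $v_k$; this can exceed $k-1-a$ where $a=|N(v_k)\cap W_{k-1}|$. Repairing the count requires the bipartiteness of $\Gamma_n(H)$ (which forbids $v_k$ from being at distance~$2$ from two adjacent members of $W_{k-1}$) together with an overlap argument for the common-neighbour pairs, neither of which you invoke. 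More to the point, the proof in~\cite{elns07} does not split off a small-$|T|$ isoperimetric range at all: it is a single induction on $n$ via exactly the leaf-deletion decomposition you reserve for the ``hard part'', bounding how many $\Gamma_{n-1}$ subcopies can be disconnected by the trace of $S$, merging the intact copies through the cross-matching into one large component, and applying the inductive hypothesis inside the few bad copies. Since that recursion is unavoidable for $|T|\ge n-1$ anyway, your isoperimetric detour adds a second nontrivial lemma without shortening the overall argument.
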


\begin{thm}
Let $n\geq 6$, the $3$-component diagnosability of $\Gamma_{n}(H)$ is $c t_{3}(\Gamma_{n}(H))=3n-8$ under the PMC model and MM$^{*}$ model.
\end{thm}
\begin{proof}

Let $v=l_{1}l_{2}\cdots l_{n}$ and $v_{i}=\big(l_{1}l_{2}\cdots l_{n}\big)\big((n-2i+1)(n-2i+2)\big)$ with $1\leq i\leq 3$. Moreover, let $A=\{v_{1},v_{2},v_{3}\}$. By the definition of $\Gamma_{n}(H)$, we have that $N_{\Gamma_{n}(H)}(v_{i_1})\cap N_{\Gamma_{n}(H)}(v_{i_2})=\{v,v_{i_1}\big((n-2i_2+1)(n-2i_2+2)\big)\}$ with $1\leq i_1<i_2\leq 3$, $N_{\Gamma_{n}(H)}(v_{1})\cap N_{\Gamma_{n}(H)}(v_{2})\cap N_{\Gamma_{n}(H)}(v_{3})=\{v\}$, and $N_{\Gamma_{n}(H)}(v)\cap N_{\Gamma_{n}(H)}(v_i)=\emptyset$. Then $\Gamma_{n}(H)$ satisfies the condition (a) of Theorem \ref{the1}. Moreover, by Lemma \ref{lem7}, the structure of $\Gamma_{n}(H)$ satisfies the condition (b) of Theorem \ref{the1}. Thus, by Theorem \ref{the1}, $ct_{3}(\Gamma_{n}(H))=3n-8$ under the PMC model and MM$^{*}$ model.
\end{proof}

\begin{thm}
Let $n\geq 8$, the $4$-component diagnosability of $\Gamma_{n}(H)$ is $c t_{4}(\Gamma_{n}(H))=4n-13$ under the PMC model and MM$^{*}$ model.
\end{thm}
\begin{proof}
Let $v=l_{1}l_{2}\cdots l_{n}$ and $v_{i}=\big(l_{1}l_{2}\cdots l_{n}\big)\big((n-2i+1)(n-2i+2)\big)$ with $1\leq i\leq 4$. Moreover, let $A=\{v_{1},v_{2},v_{3},v_{4}\}$. By the definition of $\Gamma_{n}(H)$, we have that $N_{\Gamma_{n}(H)}(v_{i_1})\cap N_{\Gamma_{n}(H)}(v_{i_2})=\{v,v_{i_1}\big((n-2i_2+1)(n-2i_2+2)\big)\}$ with $1\leq i_1<i_2\leq 4$, $N_{\Gamma_{n}(H)}(v_{i_{1}})\cap N_{\Gamma_{n}(H)}(v_{i_{2}})\cap \cdots \cap N_{\Gamma_{n}(H)}(v_{i_{k}})=\{v\}$ with $1\leq i_{1}< i_{2}<\cdots < i_{k}\leq 4$ and $k\geq 3$, and $N_{\Gamma_{n}(H)}(v)\cap N_{\Gamma_{n}(H)}(v_i)=\emptyset$. Then $\Gamma_{n}(H)$ satisfies the condition (a) of Theorem \ref{the1}. Moreover, by Lemma \ref{lem7}, the structure of $\Gamma_{n}(H)$ satisfies the condition (b) of Theorem \ref{the1}. Thus, by Theorem \ref{the1}, $ct_{4}(\Gamma_{n}(H))=4n-13$ under the PMC model and MM$^{*}$ model.
\end{proof}

\subsection{DQcube}

DQcube, introduced by Hung ~\cite{hung13}, is a novel compound graph, which uses the hypercube as a unit cluster and connects many such clusters by means of a disc-ring graph at the cost that only one additional link is added to any node in each hypercube. In the following, we first present the concept of compound graph.

\begin{Def}\textnormal{~\cite{guo10}}
Given two regular graphs $G$ and $H,$ the compound graph $G(H)$ is constructed by replacing each node of $G$ by a copy of $H$ and replacing each link of $G$ by a link that connects corresponding two copies of $H$.
\end{Def}

Next, we introduce the disc-ring graph.

\begin{Def}\textnormal{~\cite{hung13}}
The disc-ring graph, represented by $D(m, d)$ \textnormal{(}see Fig.\!\ 9\textnormal{(a))}, consists of outer and inner rings, where each ring contains $m$ nodes and $1 \leq d \leq m .$ The node set of $D(m, d)$ is $\{z_{1} z_{2} \mid z_{1} \in\{0,1\} \text { and } z_{2} \in\{0,1, \cdots, m-1\}\}$, when $z_{1} z_{2}$ is a sequence of two integers and is a label of a node. Node $0 z_{2}$ is in the outer ring while node $1 z_{2}$ is in the inner one. For integer $z$ and positive integer $m,$ we define that $z(\bmod \ m)$ denotes the remainder of the division of $z$ by $m.$ For any node $z_{1} z_{2},$ there is a link connecting it to node $z_{1} x$ for $x \in\{(z_{2}+1)(\bmod \ m),(z_{2}-1)(\bmod \ m)\}$. For any node $0 z_{2}$, $0 \leq z_{2} \leq m-1,$ there is a link connecting it to node $1 y$ for $y \in\{z_{2},(z_{2}+1)(\bmod \ m), (z_{2}+2)(\bmod \ m), \cdots,(z_{2}+ d-1)(\bmod \ m)\}.$ For any node $1 z_{2}$, $0 \leq z_{2} \leq m-1,$ there is a link connecting it to node $0y$ for $y \in\{z_{2},(z_{2}-1)(\bmod \ m),(z_{2}-2)(\bmod \ m), \cdots,(z_{2}-d+1)(\bmod \ m)\}$.
\end{Def}

Sort all nodes of hypercube cluster in the ascending order of the node labels. Then the smallest, the second smallest and the largest nodes will be $\alpha_{1}=00 \cdots 00, \alpha_{2}=00 \cdots 01$ and $\beta=$ $11 \cdots 11,$ respectively. Let $\gamma_{i}$ be the $i$th smallest node excluding any one of $\{\alpha_{1}, \alpha_{2}, \beta\}$ for $1 \leq i \leq 2^{n}-3$. According to the construction methodology of $DQ(m, d, n)$ proposed by Hung~\cite{hung13}, Zhang et al.~\cite{zhang20} give the definition as follows.

\begin{Def}\textnormal{~\cite{zhang20}}\label{def8}
The DQcube is characterized by $DQ(m, d, n)$ \textnormal{(}see Fig.\!\ 9\textnormal{(b))} where $1 \leq d \leq m$ and $d+2=2^{n}$. The node-set $V$ is represented as $\{(z_{1} z_{2}, b_{n-1} b_{n-2} \cdots b_{0})\}$ where $z_{1} z_{2}$ is the label of cluster in $D(m, d)$ and $b_{n-1} b_{n-2} \cdots b_{0}$ is the label of the node in $Q_{n} .$ Two nodes $u=(z_{1} z_{2}, b_{n-1} b_{n-2} \cdots b_{0})$ and $v=(z_{1}^{\prime} z_{2}^{\prime}, b_{n-1}^{\prime} b_{n-2}^{\prime} \cdots b_{0}^{\prime})$ are linked if and only if one of the following conditions is satisfied\textnormal{:}

\textnormal{(1)} $z_{1} z_{2}=z_{1}^{\prime} z_{2}^{\prime}$ and $\sum_{i=0}^{n-1}|b_{i}-b_{i}^{\prime}|=1$\textnormal{;}

\textnormal{(2)} $z_{1}=z_{1}^{\prime}, z_{2}^{\prime}=z_{2}+1, b_{n-1} b_{n-2} \cdots b_{0}=\alpha_{1}$ and
$b_{n-1}^{\prime} b_{n-2}^{\prime} \cdots b_{0}^{\prime}=\alpha_{2}$\textnormal{;}

\textnormal{(3)} $z_{1}-z_{1}^{\prime}=1, z_{2}=z_{2}^{\prime}$ and $b_{n-1} b_{n-2} \cdots b_{0}=$
$b_{n-1}^{\prime} b_{n-2}^{\prime} \cdots b_{0}^{\prime}=\beta$\textnormal{;}

\textnormal{(4)} $z_{1}=0, z_{1}^{\prime}=1, z_{2}^{\prime}=z_{2}+i$ and $b_{n-1} b_{n-2} \cdots b_{0}=$
$b_{n-1}^{\prime} b_{n-2}^{\prime} \cdots b_{0}^{\prime}=\gamma_{i}$.
\end{Def}

$DQ(m, d, n)$ is a $(n +1)$-regular bipartite graph of $m2^{n+1}$ nodes and $(n+1)m2^{n}$ edges. Clearly, $D Q(m, d, n)$ is triangle-free and consists of $2 m$ disjoint clusters. Each node in $DQ(m, d, n)$ is associated with an intercluster edge and has exactly one external-neighbor.

\begin{figure}[!ht]
 \begin{center}
 \resizebox*{5in}{!}{
 \includegraphics{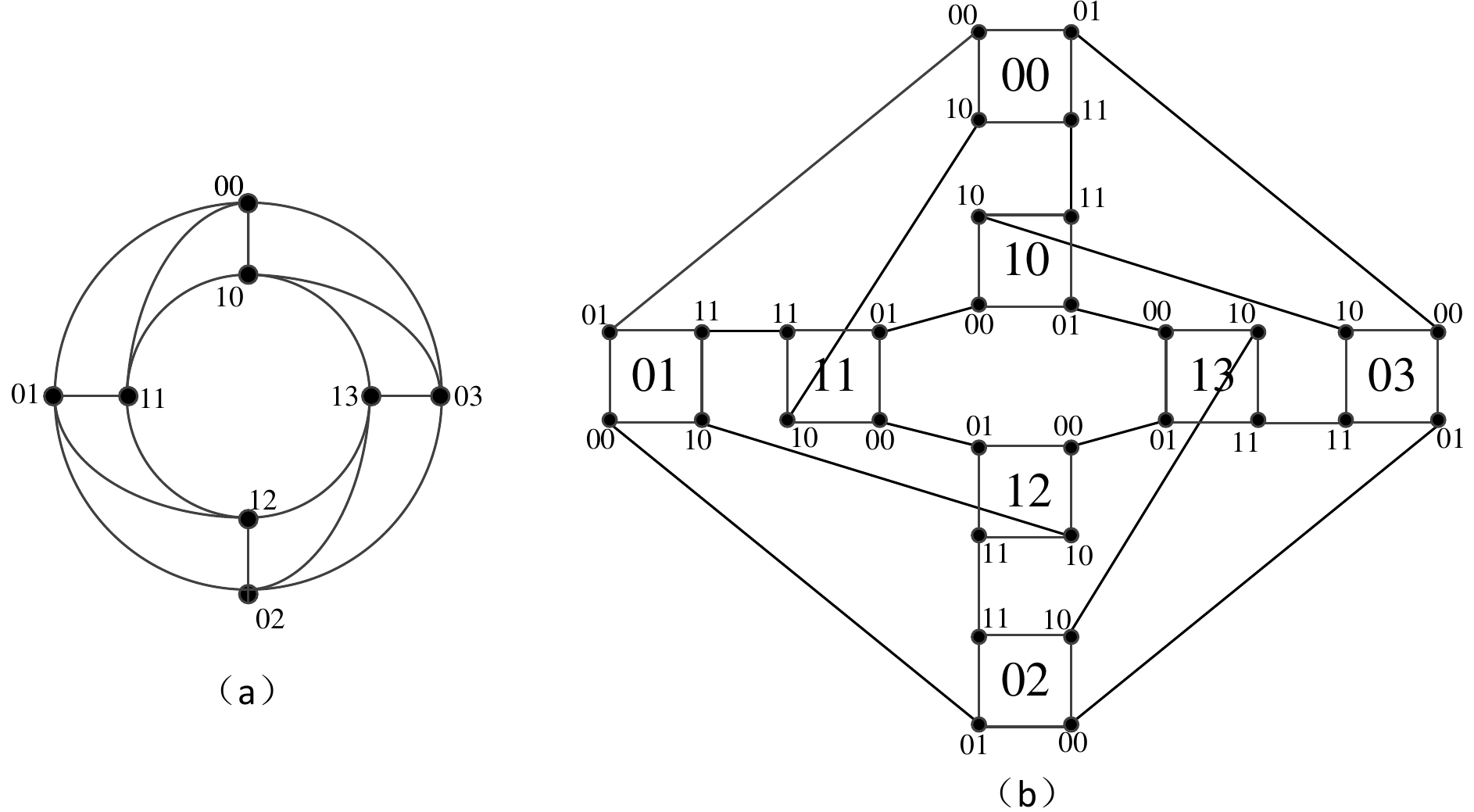}}
 \caption{(a)The disc-ring graphs $D(4, 2)$; (b)The topology of $DQ(4, 2, 2)$.}
 \end{center}
\end{figure}

\begin{lem}\textnormal{~\cite{zhang20}}\label{lem8}
For any integers $1\leq h\leq n+1$, let $S\subseteq V(DQ(m,d,n))$ with $|S|\leq (n+1)h-\frac{(h-1)(h+2)}{2}-1$, then $DQ(m,d,n)-S$ is either connected or it has a component containing at least $m2^{n+1}-|S|-(h-1)$ nodes.
\end{lem}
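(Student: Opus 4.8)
The plan is to argue by contradiction and reduce the statement to a vertex-isoperimetric estimate for $DQ(m,d,n)$. Suppose $DQ(m,d,n)-S$ is disconnected yet has no component of order at least $m2^{n+1}-|S|-(h-1)$. Let $M$ be a largest component and let $W=V(DQ(m,d,n))\setminus(S\cup V(M))$ be the union of all remaining (small) components; the failure assumption is exactly that $|W|\ge h$. Since distinct components are nonadjacent, every neighbour of a vertex of $W$ lies in $W$ or in $S$, so $N_{DQ(m,d,n)}(W)\subseteq S$ and hence $|S|\ge|N_{DQ(m,d,n)}(W)|$. Writing $g(w)=(n+1)w-\frac{w(w+1)}{2}+1$, the lemma follows once I prove the isoperimetric bound $|N_{DQ(m,d,n)}(W)|\ge g(h)$ whenever $|W|\ge h$, because $g(h)=(n+1)h-\frac{h(h+1)}{2}+1$ is exactly one more than the hypothesised bound $(n+1)h-\frac{(h-1)(h+2)}{2}-1=(n+1)h-\frac{h(h+1)}{2}$ on $|S|$, a contradiction.

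The key monotonicity is $g(w)-g(w-1)=(n+1)-w$, so $g$ is nondecreasing for $1\le w\le n+1$; hence it suffices to establish $|N(W)|\ge g(|W|)$ when $h\le|W|\le n+1$ and, separately, $|N(W)|\ge g(n+1)\ge g(h)$ when $|W|>n+1$. I would prove the tight range through the cluster decomposition $DQ(m,d,n)=\big(\bigcup_{i=1}^{2m}C_i\big)\cup M^{*}$, where each $C_i\cong Q_n$ and $M^{*}$ is the perfect matching formed by the cross edges. Put $W_i=W\cap C_i$. Harper's vertex-isoperimetric theorem, specialised to the nested-neighbourhood (simplex) regime, gives $|N_{C_i}(W_i)|\ge n|W_i|-\frac{|W_i|(|W_i|+1)}{2}+1$ for each nonempty $W_i$ with $|W_i|\le n$, with equality for precisely the configuration described in condition (a) of Theorem~\ref{the1}. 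These intra-cluster boundaries lie in pairwise distinct clusters and are disjoint, while every vertex of $W$ carries one cross edge whose far endpoint, when it lies outside $W$, is an additional boundary vertex in $S$.

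When $W$ is contained in a single cluster, say $W=W_1$ with $|W_1|=w$, all $w$ cross edges leave $W$ and land in clusters disjoint from $W$, so $|N(W)|\ge\big(nw-\frac{w(w+1)}{2}+1\big)+w=g(w)$, with equality for the simplex configuration; this is the extremal cut and is the exact analogue of the set $\widehat{F_1}$ in the proof of Theorem~\ref{the1}. The base case $h=1$ is just the $(n+1)$-connectivity of $DQ(m,d,n)$: the graph is $(n+1)$-regular, each $Q_n$-cluster is $n$-connected, and the disc-ring matching supplies one further internally disjoint route between clusters, so deleting $|S|\le n=g(1)-1$ vertices leaves it connected, matching $g(1)=n+1$.

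The main obstacle is the case in which $W$ is spread over several clusters or has more than $n+1$ vertices. Summing the intra-cluster bounds and using the superadditivity $\sum_i\binom{|W_i|+1}{2}\le\binom{|W|+1}{2}$ gives $\sum_i|N_{C_i}(W_i)|\ge n|W|-\frac{|W|(|W|+1)}{2}+c$, where $c$ is the number of occupied clusters, so the remaining $|W|-c+1$ units needed to reach $g(|W|)$ must be recovered from the cross edges leaving $W$. The difficulty is that cross edges internal to $W$ (those matching a vertex of $W_i$ to a vertex of $W_j$) reduce this external contribution, and a naive count does not by itself close the gap. Resolving it requires exploiting that in $DQ(m,d,n)$ each pair of clusters is joined by at most one cross edge, so a set saving many external edges must occupy large portions of many clusters and hence incur correspondingly larger intra-cluster boundaries; making this trade-off quantitative — and treating $|W|>n+1$ through the full strength of Harper's theorem rather than the clean simplex identity — is where the real work lies. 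Organising the whole argument as an induction on $h$, peeling off one small component at a time and invoking the $(h-1)$ case for the remainder, is the device I would use to keep this accounting manageable.
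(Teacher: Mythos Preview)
The paper does not prove Lemma~\ref{lem8}; it is quoted from \cite{zhang20} and invoked as a black box when verifying condition~(b) of Theorem~\ref{the1} for $DQ(m,d,n)$. There is therefore no in-paper argument to compare your proposal against.

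On the substance of your sketch: the reduction to an isoperimetric inequality $|N(W)|\ge g(h)$ is correct, and the single-cluster analysis (Harper on $Q_n$ plus the matched cross edge) cleanly yields the sharp value $g(w)=(n+1)w-\tfrac{w(w+1)}{2}+1$. The gap you yourself flag, however, is real and is the heart of the matter. When $W$ occupies $c\ge 2$ clusters, the summed intra-cluster bound leaves a deficit of $|W|-c+1$ that must be supplied by cross edges leaving $W$; cross edges with both endpoints in $W$ contribute nothing, and your proposed fix --- that at most one cross edge joins any two clusters, so many internal cross edges force many occupied clusters --- recovers at best $c-1$ units, not the $|W|-c+1$ you need. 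The induction-on-$h$ device does not by itself close this either: peeling off one small component $W''$ and applying the $(h-1)$ case to $W'=W\setminus W''$ bounds $|N(W')|$, but $N(W')$ may meet $W''$ and $N(W'')$ may meet $W'$, so the two neighbourhoods need not be disjoint from $W$ or from each other, and the bounds do not simply add. A complete argument (as in \cite{zhang20}) proceeds by a finer case analysis on how $W$ is distributed among clusters, typically leaning on the known component connectivity of $Q_n$ rather than the raw Harper inequality; what you have written is a correct treatment of the extremal and single-cluster cases but not yet a proof.
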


\begin{thm}\label{x}
Let $n\geq 3$ and $1\leq h\leq n-2$, then the $(h+1)$-component diagnosability of $DQ(m,d,n)$ is $c t_{h+1}(DQ(m,d,n))=(h+1)n-\frac{h(h+1)}{2}+1$ under the PMC model and MM$^{*}$ model.
\end{thm}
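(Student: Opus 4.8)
The plan is to verify that $DQ(m,d,n)$ satisfies both hypotheses of Theorem~\ref{the1} with $r=n+1$, and then simply invoke that theorem. The overall structure will mirror the preceding proofs for $C\!C\!N_n$, $G\!E\!H(s,t)$, $H\!H\!C_n$, and $\Gamma_n(H)$: each of those reduces the problem to checking (a) a local neighbourhood-intersection condition around a suitable node in one hypercube cluster, and (b) a component-structure condition that is supplied by a previously cited lemma. Here condition (b) is exactly Lemma~\ref{lem8}, so the real content is verifying condition (a) and the size/range bookkeeping.

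First I would set up the cluster decomposition. By Definition~\ref{def8}, $DQ(m,d,n)$ is a $(n+1)$-regular bipartite graph on $m2^{n+1}$ nodes that consists of $2m$ disjoint clusters, each isomorphic to $Q_n$, with every node incident to exactly one intercluster edge (one external neighbour). There is a perfect matching between clusters given by the intercluster edges, and since each node has a unique external neighbour, $|V(DQ(m,d,n))|=m2^{n+1}\ge 2\cdot 2^{n+1}=2^{n+2}>2^{2n}=2^{2(n+1)-2}$ for $n\ge 3$, so the size requirement $|V(G)|\ge 2^{2r-2}$ of Theorem~\ref{the1} holds. Pick one cluster $C_1\cong Q_n$ and write $C_1=G(X_1,Y_1;M)$ with $X_1\cong Q_{n-1}^0$; choose a node $v\in X_1$ and a set $A=\{v_1,\dots,v_{h+1}\}\subseteq N_{X_1}(v)$, which is possible because $X_1$ is an $(n-1)$-cube and $1\le h\le n-2$ gives $h+1\le n-1=\deg_{X_1}(v)$. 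Then $\deg_{DQ(m,d,n)}(x)=n+1=r$ for every $x\in A\cup\{v\}$.

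Next I would check the intersection conditions of (a). Inside $C_1\cong Q_n$, Lemma~\ref{lem30} gives that any two of the $v_i$ that have a common neighbour have exactly two common neighbours in $C_1$; since all $v_i$ share the neighbour $v$, we get $|N_{C_1}(v_{i_1})\cap N_{C_1}(v_{i_2})|=2$, and for $k\ge 3$ the intersection $\bigcap_{j=1}^k N_{C_1}(v_{i_j})$ equals $\{v\}$ (the $v_i$ lie in a common $2$-cube of $X_1$ together with $v$), so it has size $1$. For the external neighbours: each $v_i\in X_1$ has exactly one external neighbour, and I claim distinct $v_i$ have distinct external neighbours lying in (possibly different) other clusters, so $N_{DQ(m,d,n)-C_1}(v_{i_1})\cap N_{DQ(m,d,n)-C_1}(v_{i_2})=\emptyset$; this follows because the intercluster edges form a matching, so no two nodes of $C_1$ share an external neighbour. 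Combining, $|N_{DQ(m,d,n)}(v_{i_1})\cap N_{DQ(m,d,n)}(v_{i_2})|=2$ and the triple-or-more intersections have size $1$. Finally $|N_{DQ(m,d,n)}(v)\cap N_{DQ(m,d,n)}(v_i)|=0$: inside $C_1$ this holds because $v$ and $v_i$ are adjacent in the triangle-free graph $Q_n$ hence share no neighbour there, and their external neighbours lie outside $C_1$ and are distinct from all the relevant nodes. Thus condition (a) holds with $r=n+1$, and condition (b) is Lemma~\ref{lem8} read with $r=n+1$ (note $1\le h\le n-2\le r-3$ and $r\ge 4$). Theorem~\ref{the1} then yields $ct_{h+1}(DQ(m,d,n))=(h+1)(r-1)-\frac{h(h+1)}{2}+1=(h+1)n-\frac{h(h+1)}{2}+1$ under both models.

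The main obstacle, and the step I would be most careful about, is the verification that distinct $v_i$ (and $v$) really have pairwise distinct external neighbours and that those external neighbours do not accidentally coincide with one of the internal neighbours counted above — i.e. pinning down precisely that the intercluster structure of $DQ(m,d,n)$ behaves like a clean perfect matching with no ``collisions'' near the chosen gadget. This is where the four cases of Definition~\ref{def8}(2)--(4) matter: one must confirm that the external neighbour of a generic node of $X_1$ is determined by its $Q_n$-label via the $\gamma_i$ indexing in case (4) (the generic case, since $v$ and the $v_i$ can be chosen away from the special nodes $\alpha_1,\alpha_2,\beta$), and that this map is injective on cluster $C_1$. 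Once that bookkeeping is settled, the rest is a direct substitution into Theorem~\ref{the1} exactly as in the earlier applications.
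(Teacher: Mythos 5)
Your proposal follows essentially the same route as the paper: verify condition (a) of Theorem~\ref{the1} inside one cluster $C_1\cong Q_n$ via Lemma~\ref{lem30} together with the fact that the intercluster edges form a perfect matching (so distinct nodes of $C_1$ have distinct external neighbours), verify condition (b) by citing Lemma~\ref{lem8}, and then apply Theorem~\ref{the1} with $r=n+1$. One step in your write-up is wrong as stated, though easily repaired: you justify $|V(DQ(m,d,n))|\ge 2^{2r-2}$ by $m2^{n+1}\ge 2\cdot 2^{n+1}=2^{n+2}>2^{2n}$, but $2^{n+2}>2^{2n}$ is false for every $n\ge 3$. The inequality you actually need, $m2^{n+1}\ge 2^{2n}$, does hold, but only because Definition~\ref{def8} forces $d+2=2^{n}$ and $d\le m$, hence $m\ge 2^{n}-2$ and $m2^{n+1}\ge (2^{n}-2)2^{n+1}=2^{2n+1}-2^{n+2}\ge 2^{2n}$ for $n\ge 3$; with that substitution your argument matches the paper's.
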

\begin{proof}
By the definition of $DQ(m,d,n)$,  there exists a perfect matching between clusters in $DQ(m,d,n)$ and $|V(DQ(m,d,n))|=m2^{n+1}$. $DQ(m,d,n)$ consists of $2m$ clusters, denoted as $C_{1}, C_{2}, \ldots, C_{2m}$, and each cluster is isomorphic to $Q_{n}$.  Since $C_{1}\cong Q_{n}$, let $C_{1}=G(X_{1}, Y_{1}; M)$ with $X_{1}\cong Q_{n-1}^{0}$ and $Y_{1}\cong Q_{n-1}^{1}$. Choosing an arbitrary node $v$ in $X_{1}$ and a node set $A=\{v_{1}, v_{2}, \ldots, v_{h}, v_{h+1}\}  \subseteq N_{X_{1}}(v)$, we can obtain that $deg_{DQ(m,d,n)}(x)=n+1$ for any node $x\in A \cup\{v\}$. For $n\geq 3$, $|V(DQ(m,d,n))|=m2^{n+1}> 2^{2n}$. Since $C_{1}\cong Q_{n}$, by Lemma \ref{lem30}, for any two nodes of $C_{1}$, if they have common neighbors at all, then they have exactly two common neighbors in $C_{1}$. We have $|N_{C_{1}}(v_{i_{1}})\cap N_{C_{1}}(v_{i_{2}})|=2$ $(1\leq i_{1}< i_{2}\leq h+1)$ and $|N_{C_{1}}(v_{i_{1}})\cap N_{C_{1}}(v_{i_{2}})\cap \cdots \cap N_{C_{1}}(v_{i_{k}})|=1$ $(1\leq i_{1}< i_{2}<\cdots < i_{k}\leq h+1$ and $k\geq 3)$. By the definition of $DQ(m,d,n)$, the external neighbours of any pair nodes are in the different clusters. Then $|N_{DQ(m,d,n)-C_{1}}(v_{i_{1}})\cap N_{DQ(m,d,n)-C_{1}}(v_{i_{2}})|=0$ $(1\leq i_{1}< i_{2}\leq h+1)$. Thus, $|N_{DQ(m,d,n)}(v_{i_{1}})\cap N_{DQ(m,d,n)}(v_{i_{2}})|=2$ $(1\leq i_{1}< i_{2}\leq h+1)$ and $|N_{DQ(m,d,n)}(v_{i_{1}})\cap N_{DQ(m,d,n)}(v_{i_{2}})\cap \cdots \cap N_{DQ(m,d,n)}(v_{i_{k}})|=1$ $(1\leq i_{1}< i_{2}<\cdots < i_{k}\leq h+1$ and $k\geq 3)$. Then $DQ(m,d,n)$ satisfies the condition (a) of Theorem \ref{the1}. Moreover, by Lemma \ref{lem8}, $DQ(m,d,n)$ satisfies the condition (b) of Theorem \ref{the1}. Thus, by Theorem \ref{the1}, $c t_{h+1}(DQ(m,d,n))=(h+1)(r-1)-\frac{h(h+1)}{2}+1=(h+1)n-\frac{h(h+1)}{2}+1$ under the PMC model and MM$^{*}$ model.
\end{proof}

\section{Comparison results}

In this section, we will illustrate the advantages of the component diagnosability compared to other famous fault diagnosabilities, including the traditional diagnosability, strong diagnosability, pessimistic diagnosability, conditional diagnosability.

First, we present the definitions of the diagnosis strategies involved in this section. A graph is called \emph{$t$-diagnosable} if the faulty node set has no more than $t$ nodes which can be detected without replacement. Then the \emph{diagnosability} of a graph is the largest $t$ such that it is $t$-diagnosable. There are several ways to extend the concept of diagnosability. For example, the \emph{strong diagnosability} of a graph is the largest number $t$ such that it is strongly $t$-diagnosable, i.e., it is $t$-diagnosable and it will be $(t+1)$-diagnosable when there is no node whose neighbors are all faulty. A system is \emph{$t/t$-diagnosable} if, provided that the number of faulty processors is bounded by $t$, all faulty processors can be isolated within a set of size at most $t$ with at most one fault-free processor mistaken as a faulty one. The \emph{pessimistic diagnosability} of a system $G$ is the maximal number of faulty processors so that the system $G$ is $t/t$-diagnosable. Finally, the \emph{conditional diagnosability} puts constraint on every faulty node set such that they do not contain all neighbors of some nodes. 

Given a general network $G$ defined in Theorem \ref{the1}, we denote the diagnosability by $t(G)$. Then we can obtain that $t(G) \leq \delta(G) \leq r$ \cite{hsu07}, where $r$ is defined in Theorem \ref{the1}. Considering the worst situation, suppose that $\delta(G) = r$. Let $h=1$, by Theorem \ref{the1} we have that the $2$-component diagnosability of $G$ is $2r-2$. Fig. 10 compares the 2-component diagnosability of $G$ with $\delta(G)$. No matter what the value of $r$ is, it can be seen that the $2$-component diagnosability of $G$ is always larger than $\delta(G)$. Since $t(G) \leq \delta(G) \leq r$, we can conclude that the component diagnosability significantly improves the reliability of the interconnection network compared to the traditional diagnosability.

\begin{figure}[!ht]
 \begin{center}
 \resizebox*{5in}{!}{
 \includegraphics{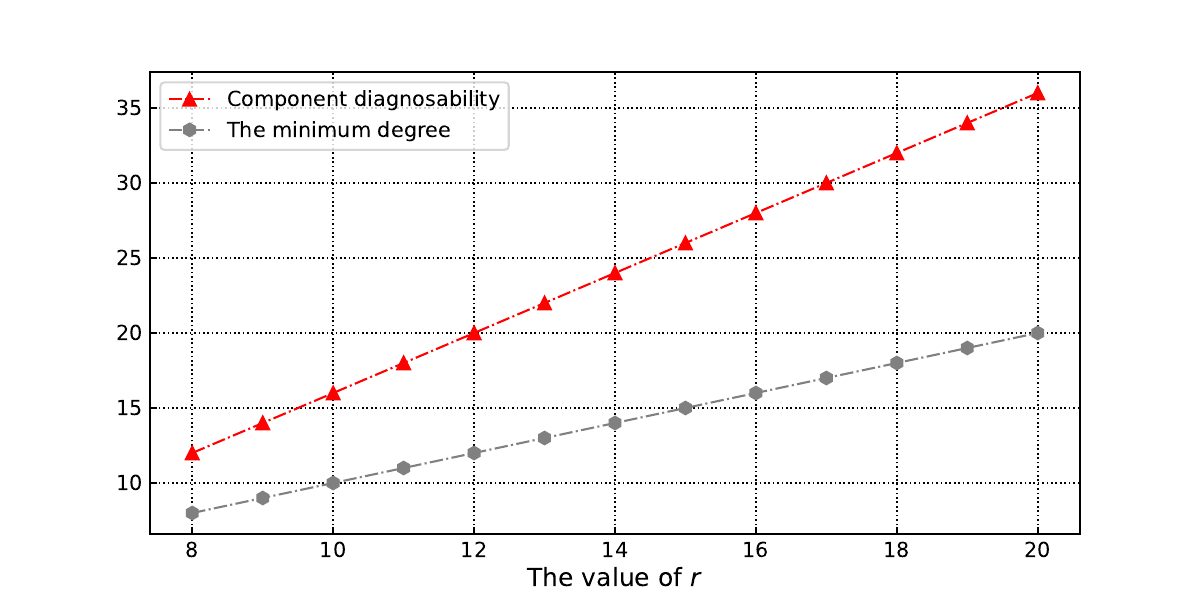}}
 \caption{The illustration of comparisons between the 2-component diagnosability and the minimum degree in a general network $G$.}
 \end{center}
\end{figure}

Several networks mentioned in this paper have similar properties, thus we choose DQcube  as a representative for comparison. Lv et al.~\cite{lv19} have showed that the traditional diagnosability, strong diagnosability, pessimistic diagnosability and conditional diagnosability of DQcube under the PMC model are $n+1$, $n+1$, $2n$, and $4n-3$, respectively. Let $h=4$, by Theorem \ref{x} we can obtain that the $5$-component diagnosability of DQcube is $5n-9$.

Fig. 11 compares the $5$-component diagnosability with strong diagnosability, pessimistic diagnosability, conditional diagnosability in DQcube. When $n$ is small, the $5$-component diagnosability is close to the conditional diagnosability, but far from other fault diagnosabilities. As $n$ increases, the $5$-component diagnosability increases faster than the conditional diagnosability. If let $h\geq 5$, the distance between the $(h+1)$-component diagnosability and the conditional diagnosability will be larger. Therefore, the $(h+1)$-component diagnosability can better evaluate the fault tolerance of the interconnection network.

\begin{figure}[!ht]
 \begin{center}
 \resizebox*{5in}{!}{
 \includegraphics{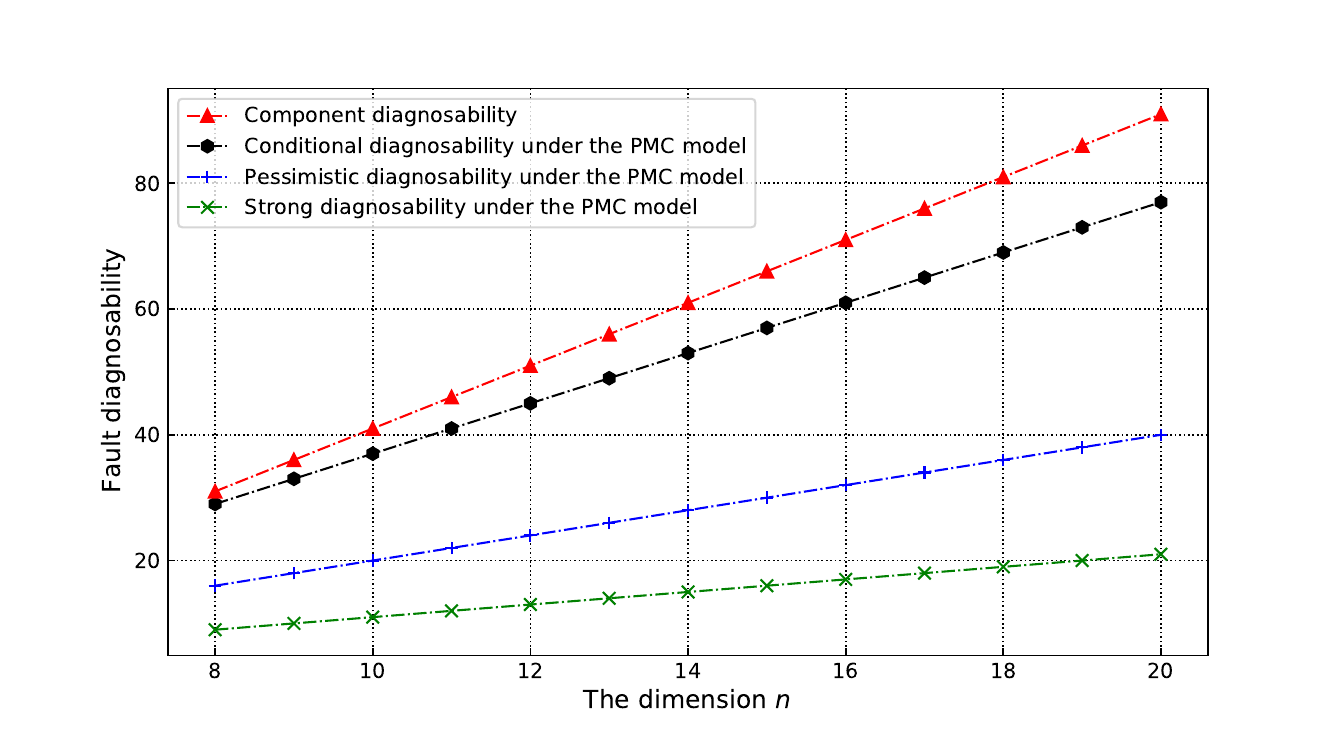}}
 \caption{The illustration of comparisons among the 5-component diagnosability, strong diagnosability, pessimistic diagnosability and conditional diagnosability in DQcube.}
 \end{center}
\end{figure}

\section{Conclusion}
In this paper, we determined the $(h+1)$-component diagnosability $ct_{h+1}(G)$ of general networks under the PMC model and MM$^{*}$ model. As applications, the component diagnosability was explored for some well-known networks, including complete cubic networks, hierarchical cubic networks, generalized exchanged hypercubes, dual-cube-like networks, hierarchical hypercubes, Cayley graphs generated by transposition trees \textnormal{(}except star graphs\textnormal{)}, and DQcube as well. Finally, we made some comparisons to show the advantages of component diagnosability. Future works include evaluating the $(h+1)$-component diagnosability of other interconnection networks and investigating the $(h+1)$-component diagnosability of general networks for general integers $h\geq r-2$.

\section*{Declaration of competing interest}
The authors declare that they have no known competing financial interests or personal relationships that could have appeared to influence the work reported in this paper.

\section*{Acknowledgements}
This work was supported by the National Natural Science Foundation of China (Nos. 62002062, 61872257, 62072109 and U1705262), Education and Scientific Research Project of Young and Middle-aged Teachers of Fujian Provincial Education Department (No. JAT190031). Natural Science Foundation of Fujian Province (Nos. 2018J07005 and 2021J06013).

\end{document}